\definecolor{webgreen}{rgb}{0,.5,0}
\definecolor{webbrown}{rgb}{.8,0,0}
\definecolor{emphcolor}{rgb}{0.95,0.95,0.95}
\ifpdf \hypersetup{pdftex,
            pdfstartview=FitH, 
            bookmarksopen=true,
            bookmarksnumbered=true
} \else \hypersetup{dvips} \fi
\newcommand {\bracks}[1]{\left[#1\right]}
\newcommand {\parens}[1]{\left(#1\right)}
\numberwithin{equation}{section}
\newtheorem{proposition}{Proposition}[section]
\newtheorem{remark}{Remark}[section]
\newtheorem{lemma}{Lemma}[section]
\newtheorem{assump}{Assumption}[section]
\numberwithin{remark}{section} \numberwithin{proposition}{section}
\numberwithin{corollary}{section}
\renewcommand{\S}{\mathcal{S}}
\newcommand {\R}{\mathbb{R}}
\newcommand {\F}{\mathcal{F}}
\newcommand {\p}{\mathbb{P}}
\newcommand {\G}{\mathfrak{G}}
\newcommand {\E}{\mathbb{E}}
\newcommand{\N}{\mathbb{N}}
\newcommand{\diff}{{\rm d}}
\newcommand{\word}{\hspace{0.2cm}}
\newcommand{\conn}{\quad\text{and}\quad}
\newcommand{\1}{\mbox{1}\hspace{-0.25em}\mbox{l}}
\newcommand{\lev}{L\'{e}vy }
\newcommand{\II}{\mathcal{I}}
\newcommand{\bw}{\Bigm|}
\newcommand{\nn}{\nonumber}
\begin{document}
\title{A Direct Solution Method for Pricing  Options in regime-switching models}
\author[Egami]{Masahiko Egami}
\author[Kevkhishvili]{RUSUDAN KEVKHISHVILI}
\thanks{First draft: November 24, 2017; this version: \today. \\ An early version of this work is circulated under the title ``On the optimal stopping problem of linear diffusions in regime-switching models".  The second author is in part supported by JSPS KAKENHI Grant Number JP 17J06948.}
\date{}
\maketitle
\begin{abstract}
Pricing financial or real options with arbitrary payoffs in regime-switching models is an important problem in finance.  Mathematically, it is to solve, under certain standard assumptions, a general form of optimal stopping problems in regime-switching models. In this article, we reduce an optimal stopping problem with an arbitrary value function in a two-regime environment  to a pair of optimal stopping problems without regime switching. We then propose a method for finding optimal stopping rules using the techniques available for non-switching problems. In contrast to other methods, our systematic solution procedure is more direct since we first obtain the explicit form of the value functions. In the end, we discuss an option pricing problem which may not be dealt with by the conventional methods, demonstrating the simplicity of our approach.
\end{abstract}
\noindent \small{Key Words: Optimal stopping, Markov switching, Diffusion, Concavity, Perpetual Options	
\newline \noindent JEL Classification: C610, C630, G130, G300
\newline \noindent Mathematics Subject Classification (2010): 60G40, 60J60, 90C39, 90C40}

\noindent\textbf{MASAHIKO EGAMI:} Graduate School of Economics,
Kyoto University, Sakyo-Ku, Kyoto, 606-8501, Japan.\\
Email: \email{egami@econ.kyoto-u.ac.jp}

\noindent\textbf{RUSUDAN KEVKHISHVILI:} Graduate School of Economics, Kyoto University, Sakyo-Ku, Kyoto, 606-8501, Japan. Research Fellow of Japan Society for the Promotion of Science.\\
Email: \email{keheisshuiri.rusudan.73m@st.kyoto-u.ac.jp}

\section{Introduction}\label{sec:introduction}
Regime-switching models have been of interest to economists as early as the 1970's, since the parameters describing economic data exhibit changes over periods of time (see, e.g., \citet{palgrave_dictionary}). For this reason, the regime switching has been incorporated into optimal stopping problems as well, allowing the underlying diffusion to have different parameters that depend on the regime. Such optimal stopping problems have various applications, such as option pricing and real options analysis in corporate finance.
\newline \indent One way of analyzing the value function of the optimal stopping problem with regime-switching is its characterization as a  viscosity solution to the associated  Hamilton–-Jacobi-–Bellman equation. For example, \citet{pemy2014} uses this method for the finite time horizon optimal stopping problem of regime-switching \lev process, while \citet{liu2016} analyses the infinite time horizon state-dependent regime-switching problem. Such studies use numerical schemes to demonstrate the application of the results, but do not offer a universal method to find the value function explicitly. The numerical methods for the evaluation of the value function are discussed in \citet{boyarchenko2009}, \citet{labahn2011}, \citet{babbin2014}, as well.
Another approach is to first guess a solution by constructing an optimal strategy and then prove that the candidate function is indeed a value function (e.g., see \citet{guo2004}). This approach is sometimes called the ``guess and verify" method.  The success of this method depends on the underlying diffusion process and the reward function of the problem; therefore, there are problems to which this method cannot be applied.  Hence the remaining issue is to derive a universal expression of the value function and suggest a technique for explicitly evaluating it for general classes of reward functions and underlying diffusions.

In this paper, we study an infinite time horizon regime-switching optimal stopping problem of an arbitrary continuous reward function that satisfies the linear growth condition, a standard assumption in the literature (see, e.g., \citet[Sec. 5.2]{Pham-book}). We reduce this problem to a pair of non-switching ones and evaluate the value functions explicitly.
It is worth discussing the idea here.  Suppose that we have two regimes $i=1$ and $i=2$. The state variable $X$ switches from $X^{(1)}$ to $X^{(2)}$ and vice versa (see Section \ref{sec:math-prep} for rigorous problem formulation). We first reduce the problem to a couple of one-dimensional problems without switching (subsection \ref{subsec:reduction}). It follows that, for each regime, the state space $\II$ is split into continuation region $\mathrm{C}_i$ and stopping region $\Gamma_i$, $i=1, 2$.  Then, defining $A_1:=\Gamma_1 \cap \Gamma_2,\; A_2:=\mathrm{C}_1 \cap \Gamma_2,\; A_3:=\mathrm{C}_2 \cap \Gamma_1, \; A_4:=\mathrm{C}_1\cap \mathrm{C}_2$, we can write
\begin{align*}
\II=(\Gamma_1 \cap \Gamma_2) \cup (\mathrm{C}_1 \cap \Gamma_2)\cup (\mathrm{C}_2 \cap \Gamma_1)\cup (\mathrm{C}_1\cap \mathrm{C}_2)=A_1 \cup A_2 \cup A_3\cup A_4.
\end{align*}
Next, (1) we characterize the value functions of the problem in an explicit way (see subsections \ref{subsec:form} and \ref{subsec:ode-solution}). With the form of the value functions available, we propose a procedure for solving the optimal stopping problem in Section \ref{sec:method}. Specifically, based on the result (1), (2) we obtain the value functions in each region $A_1$ through $A_4$ as the smallest nonnegative concave majorants in some transformed space: the method for non-switching problems presented in \citet{DK2003}. Thanks to (1) and (2), the proof of optimality is greatly reduced: we simply need to check certain geometric properties of the associated functions.  Therefore, our method is rather direct and differs from so-called ``guess and verify" methods conventional in the literature of regime-switching optimal stopping. Due to the direct nature of our solution, it is free from hard analysis that often requires ingenuity for proving some inequalities which are referred to as the \emph{verification lemma} (see, e.g., \citet[Theorem 10.4.1]{oksendal-book}). Moreover, the proposed method works for any diffusion whose parameters satisfy Lipschitz and linear growth conditions (Section \ref{subsec:regime-switching}) and any continuous reward function satisfying the abovementioned condition. Hence, it can apply to problems that may not be handled by the conventional methods.  We demonstrate this in Section  \ref{sec:example}  with a worked example of capped-call options in the regime-switching model.

It should be noted that the reduction of the regime-switching problem to non-switching ones is also discussed in \citet{le-wang}. They consider a finite time horizon optimal stopping problem for a continuous non-increasing nonnegative convex payoff function with regime-switching geometric Brownian motion (GBM) and break the problem into a sequence of optimal stopping problems of constant-coefficient GBM processes. \citet{le-wang} derive an iterative method for approximating the value function of the original problem. Their method differs from ours in that we transform the original regime-switching optimal stopping problem into a pair of non-switching ones and find a solution explicitly. We emphasize that this is a novel approach. In our study we also treated a broader class of reward functions and underlying diffusions.
\newpage
\section{Mathematical Framework}\label{sec:math-prep}
\subsection{Basic facts}\label{subsec:basic-fact}
Let us consider the probability space $(\Omega, \F, \p)$, where $\Omega$ is the set of all possible realizations of the
stochastic economy, and $\p$ is a probability measure defined on $\F$. We denote by
$\mathbb{F}=\{\F_t\}_{t\ge 0}$ the filtration satisfying the usual conditions and consider a regular time-homogeneous diffusion process $X$ adapted to $\mathbb{F}$. Let $\p_x$ denote the probability measure associated to $X$ when started at $x\in \mathcal{I}$.
The state space of $X$ is $\mathcal{I}=(\ell, r)\in \R$, where $\ell$ and $r$ are both \emph{natural} boundaries.  That is, $X$ cannot start from or exit from $\ell$ or $r$.
We assume that $X$ satisfies the following stochastic differential equation:
\[\diff X_t=\mu (X_t)\diff t + \sigma (X_t) \diff B_t\conn X_0=x,\]
where $B=\{B_t: t\ge 0\}$ is a standard Brownian motion and  $\mu:\mathcal{I}\rightarrow\R$ and $\sigma : \mathcal{I}\rightarrow\R_+$ satisfy the following Lipschitz and linear growth conditions (see \citet[Theorem 2.9, Chapter 5]{karatzas-shreve-book1})
\begin{align}\label{eq:Lipschitz-and-growth}
|\mu(x)-\mu(y)|+|\sigma(x)-\sigma(y)|&\leq \bar{K}|x-y| \quad \quad \forall x,y\in\mathcal{I}\\
\mu(x)^2+\sigma(x)^2&\leq K\parens{1+x^2}\nonumber
\end{align}
with positive constants $\bar{K}$ and $K$. The second condition is implied by the first one. These conditions ensure the existence and uniqueness of a strong solution given an initial condition.
The optimal stopping problem in one dimension is formulated by
\begin{equation}\label{eq:no-regime}
V(x)=\sup_{\tau\in \S}\E^x[e^{-q \tau}h(X_\tau)]
\end{equation}
where $ q>0$,  $\tau\ge0$, $\S$ is a set of stopping times of $\mathbb{F}$ and $h:(\ell,r)\to\R$ is a continuous Borel function on $\mathcal{I}$.
\newline \indent To review the general theory for optimal stopping of linear diffusions,
we state some fundamental facts: the infinitesimal generator $\G$ for a continuous function $v$ of the process $X$ is defined by
\[\G v(\cdot)=\lim\limits_{t\downarrow 0}\frac{1}{t}\left(P_tv(\cdot)-v(\cdot)\right)\]
where $P_t$ is the transition semigroup of $X$. The equation $\G v-qv=0$ has two
fundamental solutions: $\psi_{q}(\cdot)$ and $\varphi_{q}(\cdot)$.
We set
$\psi_q(\cdot)$ to be the increasing and $\varphi_q(\cdot)$ to be the
decreasing solution.  They are linearly independent positive
solutions and are uniquely determined up to multiplication. It is well
known that for $H_z:=\inf\{t\ge 0: X_t=z\}$,
\begin{align*}
\E^x\bracks{e^{-q H_z}}
=\begin{cases}
\frac{\psi_q(x)}{\psi_q(z)}, & x \le z,\\[4pt]
\frac{\varphi_q(x)}{\varphi_q(z)}, &x > z.
\end{cases}
\end{align*} For the complete characterization of $\psi(\cdot)$ and
$\varphi(\cdot)$, refer to \citet[Section 4.6]{IM1974}. Let us now define
\begin{align} \label{eq:F}
F(x)&:=\frac{\psi_q(x)}{\varphi_q(x)}, \hspace{0.5cm} x\in
\mathcal{I}.
\end{align}
Then $F(\cdot)$ is continuous and strictly increasing.  Next,
following \citet{dynkin} (Section 8 in Appendix), we define concavity
of a function with respect to $F$ as follows:
A real-valued function $u$ is called \emph{$F$-concave} on $\mathcal{I}$
if for any $[c,d]\subseteq \mathcal{I}$, we have for every $x\in[c, d]$ 
\[
u(x)\geq
u(c)\frac{F(d)-F(x)}{F(d)-F(c)}+u(d)\frac{F(x)-F(c)}{F(d)-F(c)}.\]
When both boundaries $\ell$ and $r$ are natural, $V(x)<+\infty$ for all $x\in (\ell, r)$ if and only if
\begin{align}\label{eq:finiteness}
\xi_{\ell}:=\limsup_{x\downarrow \ell}\frac{h^+(x)}{\varphi_q(x)} \conn \xi_r:=\limsup_{x\uparrow r}\frac{h^+(x)}{\psi_q(x)}
\end{align}
are both finite (\citet[Proposition 5.10]{DK2003}). Here $h^+(x)=\max\{h(x),0\}$.
Let $W:[0,\infty)\to\R$ be the \emph{smallest nonnegative concave majorant} of
\begin{equation}\label{eq:transform}
H(y):=\begin{cases}
\frac{h}{\varphi_q}\circ F^{-1}(y) \quad \word \text{if} \word y>0,\\
\xi_{\ell} \qquad \qquad \quad\;\:\text{if} \word y=0,
\end{cases}
\end{equation}
where $F^{-1}$ is the inverse of $F$. Then we have
\begin{equation}\label{eq:v=phiW}
V(x)=\varphi_q(x)W(F(x)) \quad \text{for all} \word x\in(\ell,r)
\end{equation}
and finally,  when $\xi_{\ell}=\xi_r=0$, the optimal stopping and continuation regions are
\[
\Gamma:=\{x\in (\ell, r): V(x)=h(x)\} \conn \mathrm{C}:=\{x\in (\ell, r): V(x)>h(x)\}.
\] with the optimal stopping time $\tau^*:=\inf\{t\ge 0: X_t\in \Gamma\}$. See \citet[Propositions 5.12 and 5.13]{DK2003}.
Note that for the rest of this article, the term ``transformation" should be understood as \eqref{eq:transform}.

\subsection{Regime-switching models}\label{subsec:regime-switching}
Given the probability space, the diffusion in this article is of the form
\begin{equation*}
\diff X_t=\mu(X_t, \eta_t)\diff t + \sigma (X_t, \eta_t) \diff B_t,\quad X_0=x,
\end{equation*}
where $\eta_t\in \{1, 2\}$ is a two-state continuous-time Markov chain  adapted to $\mathbb{F}$, independent of $B$, and has a generator of the form
\[\left(
\begin{array}{cc}
-\lambda_1 & \lambda_1 \\
\lambda_2 & -\lambda_2 \\
\end{array}
\right),\] 
with $\lambda_i>0,\; i=1, 2$.
We assume that $\mu(x, i)$ and $\sigma(x, i)$ satisfy the Lipschitz and linear growth conditions \eqref{eq:Lipschitz-and-growth} for each $i$. 
Then, there exists a unique strong solution $X_t$ to the regime-switching differential equation above (\citet[Theorem 3.13]{markovian_sde}).
For brevity we will call the diffusion with drift $\mu(\cdot, i)$ and diffusion parameter $\sigma(\cdot, i)$, $i=1, 2,$ \emph{diffusion $X^{(i)}$} when no confusions arise.  Then our problem is to solve
\begin{equation}\label{eq:problem}
v^*(x, i):=\sup_{\tau\in \S}\E[e^{-q \tau}h(X_\tau)|X_0=x, \eta_0=i]=\sup_{\tau\in \S}\E_{i}^x[e^{-q \tau}h(X_\tau)], \quad i=1, 2.
\end{equation}
Let the stopping and continuation regions be $\Gamma_i$ and $\mathrm{C}_i$, respectively.
To obtain a concrete result useful in solving real-life problems, we set the following assumptions:
\begin{assump}\label{assump1}\normalfont\hspace{1cm}
\begin{enumerate}
\item The reward $h$ is a  continuous function that satisfies the linear growth condition: $|h(x)|\le C(1+|x|)$ on $ \mathcal{I}$ for some strictly positive constant $C<\infty$.
\item The discount rate $q>\beta$, where $\beta$ is a constant satisfying $x\mu(x,i)+\frac{1}{2}\sigma(x,i)^2\le\beta(1+x^2) \quad  \forall (x,i)\in\mathcal{I}\times\{1,2\}$ (see \citet[Lemma 5.2.1]{Pham-book} and \citet[Lemma 3.1]{pham2007} for a similar statement).
\end{enumerate}
\end{assump}
\noindent Note that our assumptions on the parameters $\mu(\cdot,i)$ and $\sigma(\cdot,i)$ guarantee the existence of such $\beta$. In addition, for each $i=1,2$, we have
\begin{equation}\label{eq:bounds}
\E^x_i[X_t^2]\le (1+x^2)e^{2\beta t} \quad \forall t\ge0
\end{equation}
from It\^{o}'s formula and Gronwall's lemma.
The rest of this section is a review of some useful results that relate to non-switching problem \eqref{eq:no-regime} because we can apply them to our regime-switching problem \eqref{eq:problem}.
When Assumption \ref{assump1} is satisfied for the non-switching problem, we have the following proposition:
\begin{proposition}\label{prop:pham}
	Under Assumption \ref{assump1}, the value function $V$ in \eqref{eq:no-regime} is a unique solution to
	\begin{equation}\label{eq:vi}
	\min[qV-\G V, \word V-h]=0
	\end{equation}
	satisfying the linear growth condition.
\end{proposition}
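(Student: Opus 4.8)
The plan is to split the statement into an \emph{existence} part (the value function $V$ of \eqref{eq:no-regime} does solve the variational inequality) and a \emph{uniqueness} part (any solution with linear growth coincides with $V$), relying on the one-dimensional diffusion machinery recalled in Section~\ref{subsec:basic-fact} for the former and on a probabilistic verification argument for the latter.

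First I would record that $V$ has linear growth and, in particular, is finite on all of $(\ell,r)$. Using $|h(x)|\le C(1+|x|)$, the estimate $\E^x[|X_t|]\le\sqrt{1+x^2}\,e^{\beta t}$ obtained from \eqref{eq:bounds} by Jensen, and $q>\beta$, a localization/Fatou argument gives $\E^x[e^{-q\tau}|h(X_\tau)|]\le \widetilde C(1+|x|)$ uniformly in $\tau\in\S$ (this is essentially \citet[Lemma 5.2.1]{Pham-book}), hence $|V(x)|\le\widetilde C(1+|x|)$. Since $V$ is then finite everywhere, the finiteness condition \eqref{eq:finiteness} holds, so $\xi_\ell,\xi_r<\infty$ and the representation \eqref{eq:v=phiW}, namely $V=\varphi_q\cdot(W\circ F)$ with $W$ the smallest nonnegative concave majorant of the transform \eqref{eq:transform}, is available; in particular $V$ is continuous on $(\ell,r)$. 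That $V$ solves $\min[qV-\G V,\ V-h]=0$ then follows from the geometry of $W$: from $W\ge H$ we get $V\ge h$; on the continuation region $\mathrm{C}$ the majorant $W$ is affine in $y=F(x)$, so there $V=a\psi_q+b\varphi_q$ locally and hence $qV-\G V=0$ (interior elliptic regularity for the uniformly elliptic generator $\G$ makes this classical), while on $\Gamma$ one has $V-h=0$; finally $F$-concavity of $W\circ F$ is precisely the statement that $V$ is $q$-superharmonic, i.e.\ $qV-\G V\ge 0$ on all of $(\ell,r)$ in the viscosity sense. Combining the three facts gives the variational inequality. (Alternatively one bypasses the representation: the dynamic programming principle for \eqref{eq:no-regime} shows $V$ is a viscosity solution, and interior regularity upgrades it where needed.)

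For uniqueness, let $U$ be any solution of the variational inequality satisfying the linear growth condition. To show $U\ge V$: from $qU-\G U\ge 0$ the process $e^{-qt}U(X_t)$ is a local supermartingale, which is promoted to a genuine supermartingale by a localization argument whose $L^1$-domination comes from linear growth of $U$, \eqref{eq:bounds}, and $q>\beta$; since also $U\ge h$, optional sampling gives $\E^x[e^{-q\tau}h(X_\tau)]\le\E^x[e^{-q\tau}U(X_\tau)]\le U(x)$ for every $\tau\in\S$, and taking the supremum yields $V\le U$. To show $U\le V$: on the open set $\mathcal{O}:=\{U>h\}$ we have $qU-\G U=0$, so (by interior regularity $U\in C^2(\mathcal{O})$) the stopped process $e^{-q(t\wedge\tau_\mathcal{O})}U(X_{t\wedge\tau_\mathcal{O}})$, with $\tau_\mathcal{O}:=\inf\{t\ge 0:U(X_t)=h(X_t)\}$, is a true martingale; hence $U(x)=\E^x[e^{-q(t\wedge\tau_\mathcal{O})}U(X_{t\wedge\tau_\mathcal{O}})]$, and letting $t\to\infty$, using $U(X_{\tau_\mathcal{O}})=h(X_{\tau_\mathcal{O}})$ on $\{\tau_\mathcal{O}<\infty\}$ and $\E^x[e^{-qt}U(X_t)\I{\tau_\mathcal{O}>t}]\to 0$ (again linear growth, \eqref{eq:bounds}, and $q>\beta$), we obtain $U(x)=\E^x[e^{-q\tau_\mathcal{O}}h(X_{\tau_\mathcal{O}})\I{\tau_\mathcal{O}<\infty}]\le V(x)$. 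Therefore $U=V$.

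The routine ingredients (dynamic programming principle, viscosity-solution verification, interior regularity of $\G$) are standard, so I expect the main obstacle to be the integrability bookkeeping: turning the local (super)martingale properties into genuine ones and showing that the term at infinity $\E^x[e^{-qt}U(X_t)\I{\tau_\mathcal{O}>t}]$ vanishes, all of which must be extracted from the interplay of the linear growth bounds with the exponential moment estimate \eqref{eq:bounds} under the hypothesis $q>\beta$.
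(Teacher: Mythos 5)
Your argument is correct in substance, but it takes a genuinely different route from the paper. The paper's proof is two lines: it establishes the linear growth of $V$ by essentially the same moment estimate you use, and then disposes of the variational inequality \eqref{eq:vi} entirely by citation, invoking Theorem 5.2.1, Lemma 5.2.2 and Remark 5.2.1 of \citet{Pham-book}, where $V$ is shown to be the unique \emph{viscosity} solution with linear growth. You instead reconstruct both halves: existence via the Dayanik--Karatzas representation $V=\varphi_q\cdot(W\circ F)$ (reading off $V\ge h$ from $W\ge H$, $qV-\G V=0$ from affineness of $W$ on the continuation region, and $qV-\G V\ge 0$ from $F$-concavity), and uniqueness via a probabilistic verification (supermartingale bound for $U\ge V$, martingale plus transversality on $\{U>h\}$ for $U\le V$). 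What the paper's citation buys is brevity and a precise statement of the sense in which \eqref{eq:vi} holds; what your argument buys is a self-contained proof using only the one-dimensional machinery the paper has already set up, with the roles of $q>\beta$ and the linear growth hypothesis made explicit. Two points deserve care if you write this out in full: first, the proposition must be read as asserting uniqueness among viscosity solutions (as in Pham), since $V$ need not be $C^2$ across the free boundary, and your existence argument correctly lands there; second, in the uniqueness step the passage from ``$U$ is a (viscosity) supersolution'' to ``$e^{-qt}U(X_t)$ is a supermartingale'' is not automatic for non-smooth $U$ --- in one dimension it follows from the equivalence of $q$-superharmonicity with $F$-concavity of $U/\varphi_q$, but that equivalence should be invoked explicitly rather than asserted.
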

\begin{proof}
	Equation \eqref{eq:bounds} holds in this case and we have $\E^x[e^{-qt}|X_t|]\le e^{-qt}(1+|x|)e^{\beta t}\le 1+|x| \quad \forall t\ge0$. This implies the linear growh of $V$, since
	$V\le\E^x[\sup\limits_{t\ge 0}e^{-qt}h(X_t)]\le C\E^x[\sup\limits_{t\ge 0}e^{-qt}(1+|X_t|)]\le C+C(1+|x|)$ with some constant $C>0$. The proof for \eqref{eq:vi} is immediate from Theorem 5.2.1, Lemma 5.2.2 and Remark 5.2.1 in \citet{Pham-book}.
\end{proof}
For a real value $q>0$, let us recall the resolvent operator
\begin{equation}\label{eq:resolvent}
U^{(q)}f(x):=
\E^x\left[\int_0^\infty e^{-qt}f(X_t)\diff t\right]
\end{equation}
for a continuous function $f$. Thanks to Proposition \ref{prop:pham} together with Assumption \ref{assump1}, we ensure  that
\begin{equation}\label{eq:Uh-finite}
U^{(q)}h(x)<\infty \quad \text{and} \quad U^{(q)}V(x)<\infty \quad x\in \II
\end{equation} for $V$ in \eqref{eq:no-regime}.
In the next section,  we will solve the optimal stopping problem in regime-switching model \eqref{eq:problem} under Assumption \ref{assump1} with the aid of the $F$-concavity characterization and the variational inequality \eqref{eq:vi} for the value function.

\section{Characterization of the value function}\label{sec:formulation}
In this section, we will obtain an explicit form of the value functions of \eqref{eq:problem} in steps. In subsection \ref{subsec:reduction}, we will rewrite the original problem into a pair of one-dimensional optimal stopping problems, making a quantitative analysis of the value function.  We then split the state space $\mathcal{I}$ into regions, depending on which regime the state variable lives in and on whether a point in $\mathcal{I}$ belongs to continuation region or stopping region. We provide, for each region, the form of the value functions.  This is done in subsection \ref{subsec:form}. Finally, in subsection \ref{subsec:ode-solution} we analyze the ODE that arises in one of the regions.

\subsection{Reduction to a pair of one-dimensional problems}\label{subsec:reduction}
We mainly consider $v^*(x, 1)=\sup_{\tau\in\S}\E^x_1[e^{-q\tau}h(X_\tau)]$ since $v^*(x, 2)$ can be handled in the same way.  Let us define a Poisson process $N^1=(N^{1}_t)_{t\ge 0}$ with rate $\lambda_1$ independent of $B$. The first arrival time of $N^1$, denoted by $T^1$, has the exponential distribution whose density is
\[
p(t)=\begin{cases}
\lambda_1 e^{-\lambda_1 t}, & t\ge0,\\
0, & \text{otherwise}.
\end{cases}
\]
Similarly, we define $T^2$ as the first arrival time of a Poisson process $N^2$ with rate $\lambda_2$ independent of $B$ and $N^1$. Then, $T^2$ is the exponential random variable with parameter $\lambda_2$ independent of $B$ and $T^1$.
In the sequel, we transform our original problem \eqref{eq:problem} into a pair of optimal stopping problems without regime-switching. For this, we introduce the notation $\bar{v}(\cdot,j)$ which indicates the optimal value obtained when starting at $X_{T^i}$, $i\neq j$ (see Lemma \ref{lem:hitting}). Later in Proposition \ref{prop:v-star-version} we will prove that $\bar{v}(\cdot,j)$ is the same as $v^*(\cdot,j)$.
\begin{lemma}\label{lem:hitting}
For each $i$, the value function in the regime-switching problem \eqref{eq:problem} is continuous, satisfies the linear growth condition and the following dynamic programming principle (DPP):
\begin{equation}\label{eq:dpp}
v^*(x,i)=\sup\limits_{\tau\in\S}\E_i^x[e^{-q\tau}h(X_{\tau})\1_{\tau<T^i}+e^{-qT^i}\bar{v}(X_{T^i},j)\1_{\tau>T^i}] \qquad i\neq j, \; i,j=1,2,
\end{equation}
where $\bar{v}(x,j)=\sup\limits_{\tau\in \S}\E_j^x[e^{-q\tau}h(X_{\tau})]$ with $x$ evaluated at $X_{T^i}$. \\
Moreover, for each $i$, the supremum in \eqref{eq:problem} over $\tau\in\mathcal{S}$ is attained when $\tau$ is a hitting time of the set $\{x\in\mathcal{I}: v^*(x,i)=h(x)\}$. 
\end{lemma}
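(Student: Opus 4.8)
The plan is to establish the three claims — continuity plus linear growth of $v^*(\cdot,i)$, the DPP \eqref{eq:dpp}, and the optimality of the hitting time — by exploiting the independence of the regime clock $\eta$ (equivalently, of the Poisson processes $N^1,N^2$) from the driving Brownian motion $B$, together with the strong Markov property of the pair $(X,\eta)$. Linear growth is the easiest piece: exactly as in the proof of Proposition \ref{prop:pham}, the bound \eqref{eq:bounds} holds for the regime-switching diffusion because Assumption \ref{assump1}(2) is imposed uniformly in $i$, so $\E^x_i[e^{-qt}|X_t|]\le 1+|x|$ for all $t\ge 0$, and then $v^*(x,i)\le C\,\E^x_i[\sup_{t\ge 0}e^{-qt}(1+|X_t|)]\le C'(1+|x|)$. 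Continuity of $x\mapsto v^*(x,i)$ I would get from continuous dependence of the strong solution on the initial condition (guaranteed by the Lipschitz condition \eqref{eq:Lipschitz-and-growth}), uniform integrability coming from the linear-growth bound, and the usual argument that a supremum of equi-continuous functions is continuous; alternatively one can quote that $v^*(\cdot,i)$ is a viscosity solution of the associated HJB system and invoke its continuity, but the direct estimate is cleaner here.

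For the DPP, I would condition on the first regime-switch time $T^i$, which under $\p^x_i$ is $\mathrm{Exp}(\lambda_i)$ and independent of $B$. For any stopping time $\tau\in\S$ split the payoff according to $\{\tau<T^i\}$ and $\{\tau\ge T^i\}$:
\begin{equation*}
\E^x_i[e^{-q\tau}h(X_\tau)]=\E^x_i\bigl[e^{-q\tau}h(X_\tau)\1_{\tau<T^i}\bigr]+\E^x_i\bigl[e^{-qT^i}\,\E^x_i[e^{-q(\tau-T^i)}h(X_\tau)\mid\F_{T^i}]\1_{\tau\ge T^i}\bigr].
\end{equation*}
On $\{\tau\ge T^i\}$ the process restarts at $X_{T^i}$ in regime $j\neq i$, and the strong Markov property of $(X,\eta)$ gives that the inner conditional expectation is bounded above by $\bar v(X_{T^i},j)$ (take sup over the remaining stopping rule); taking sup over $\tau$ on the left yields ``$\le$'' in \eqref{eq:dpp}. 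For the reverse inequality, given any $\eps>0$ pick an $\eps$-optimal stopping rule for $\bar v(\cdot,j)$ and, on $\{\tau\ge T^i\}$, paste it onto the behaviour before $T^i$; this constructs an admissible $\tau'\in\S$ whose value is within $\eps$ of the right-hand side, and letting $\eps\downarrow 0$ gives ``$\ge$''. The measurability of the pasted strategy is handled by a standard measurable-selection argument (or by restricting to the countable family of hitting times, which suffices since the one-dimensional problem $\bar v(\cdot,j)$ is solved by a hitting time by the theory recalled in subsection \ref{subsec:basic-fact}).

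Finally, for optimality of the hitting time of $\Gamma_i=\{x:v^*(x,i)=h(x)\}$, I would use the variational-inequality characterization: by Proposition \ref{prop:pham} applied fibrewise — more precisely, by the analogue of \eqref{eq:vi} for the coupled system, which holds because Assumption \ref{assump1} is in force — $v^*(\cdot,i)$ satisfies $\min[qv^*(\cdot,i)-\G^{(i)}v^*(\cdot,i)-\lambda_i(v^*(\cdot,j)-v^*(\cdot,i)),\,v^*(\cdot,i)-h]=0$, so $e^{-q(t\wedge T^i)}v^*(X_{t\wedge T^i},\eta_{t\wedge T^i})$ is a supermartingale that becomes a martingale up to the hitting time $\tau^*_i:=\inf\{t:X_t\in\Gamma_i\}$ while $\eta$ stays in regime $i$; combining this with the DPP \eqref{eq:dpp} and an induction over successive regime switches (each occurring at an independent exponential time, with $e^{-qT^i}<1$ giving geometric decay so the tail contribution vanishes) shows $\tau^*_i$ attains the supremum. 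The main obstacle I expect is the last point — making the ``paste-and-iterate across infinitely many regime switches'' rigorous, i.e. controlling the residual term $\E^x_i[e^{-qT^i}(\,\cdots)]$ after $n$ switches and showing it tends to $0$; here the strict inequality $q>0$ (so $\E[e^{-qT^i}]=\lambda_i/(\lambda_i+q)<1$) together with the linear-growth bound on $v^*$ and \eqref{eq:bounds} is exactly what is needed, but the bookkeeping must be done carefully to conclude both the DPP reverse inequality and the optimality of $\tau^*_i$ simultaneously.
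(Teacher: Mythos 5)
Your treatment of the first three claims — linear growth via \eqref{eq:bounds}, continuity via continuous dependence on the initial condition plus uniform integrability, and the DPP by splitting on $\{\tau<T^i\}$ versus $\{\tau\ge T^i\}$ (tower property for ``$\le$'', pasting an $\eps$-optimal rule after $T^i$ for ``$\ge$'') — is essentially identical to the paper's proof, and is correct.

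Where you diverge is the final claim, and there your argument has a gap. You propose to derive the coupled variational inequality $\min[qv^*(\cdot,i)-\G_i v^*(\cdot,i)-\lambda_i(v^*(\cdot,j)-v^*(\cdot,i)),\,v^*(\cdot,i)-h]=0$ ``by Proposition \ref{prop:pham} applied fibrewise'', but Proposition \ref{prop:pham} is stated and proved only for the non-switching problem \eqref{eq:no-regime}; the coupled system version is nowhere established in your argument, and the subsequent supermartingale-plus-iteration-over-switches scheme is, by your own admission, left as unfinished bookkeeping. None of this is needed. The point of the DPP \eqref{eq:dpp} is that it already exhibits $v^*(\cdot,i)$ as the value of a \emph{one-dimensional, non-switching} optimal stopping problem: conditioning on the exponential time $T^i$ converts \eqref{eq:dpp} into $\sup_{\tau}\E_i^x[e^{-(q+\lambda_i)\tau}(h-\bar g_{i,j})(X^{(i)}_\tau)]+\bar g_{i,j}(x)$, a standard problem for the single diffusion $X^{(i)}$ with discount rate $q+\lambda_i$ and modified reward. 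The paper then simply invokes \citet[Proposition 5.13]{DK2003} for this reduced problem to conclude that the supremum is attained at the hitting time of $\{x:v^*(x,i)=h(x)\}$. No induction over successive regime switches, and no control of a residual term after $n$ switches, is required — the single conditioning on $T^i$ absorbs all future switching behaviour into $\bar v(\cdot,j)$.
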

\begin{proof}
The proof of the linear growth condition is identical to the one in Proposition \ref{prop:pham}. In fact, from \eqref{eq:bounds}, we have 
\[\E_1^x[e^{-qt}|X_t|]\le e^{-qt}\sqrt{\E_1^x[(X_t)^2]}  \le e^{-qt}e^{\beta t}(1+|x|)\le(1+|x|) \quad \forall t\ge 0.\] 
Therefore, due to the linear growth of $h$,
\[v^*(x,1)\le \E_1^x[\sup\limits_{t\ge 0}e^{-qt}h(X_t)]\le C\E_1^x[\sup\limits_{t\ge 0}e^{-qt}(1+|X_t|)]\le C+C\E_1^x[\sup\limits_{t\ge 0}e^{-qt}|X_t|]\le C+C(1+|x|)\]
with constant $C>0$ which proves the claim. \\
Next, we prove the continuity of $v^*(x,1)$. Due to the Lipschitz and linear growth conditions on the diffusion parameters for each regime, the following holds for all $t\ge0$ (see \citet[eq. (1.19)]{Pham-book}):
\begin{align}\label{eq:pham-1.19}
\E_1[\sup\limits_{0\le u\le t}|X^x_u-X^y_u|]\le \sqrt{\E_1[\sup\limits_{0\le u\le t}|X^x_u-X^y_u|^2]}\le e^{\beta_0t}|x-y|
\end{align}
where $\beta_0$ satisfies $(\mu(x,i)-\mu(y,i))(x-y)+\frac{1}{2}(\sigma(x,i)-\sigma(y,i))^2\le \beta_0|x-y|^2$,  $\forall i$ and $\forall x,y\in\mathcal{I}$. The proof of \eqref{eq:pham-1.19} is an immediate application of It\^{o}'s formula and Gronwall's lemma and its technique is the same for both regime-switching and non-switching diffusion: the superscripts $x$ and $y$ denote the starting positions. Given that the initial regime is 1, take any sequence $(x_n)_{n\in\N}$ in $\mathcal{I}$ converging to $x$ and consider random variables $Y_n:=\sup\limits_{0\le u\le t}|h(X_t^{x_n})-h(X_t^x)|$ for any fixed $t$. From \eqref{eq:pham-1.19}, we know that $\sup\limits_{0\le u\le t}|X^{x_n}_u-X^x_u|\to 0$ in probability for all $t\ge 0$ as $n\to\infty$. Since $h$ is continuous, we have $|h(X_u^{x_n})-h(X_u^x)|\overset{p}{\to}0$ for any $0\le u\le t$ using \citet[Chapter 3, Proposition 3.5]{cinlar}. 
Therefore, $Y_n\to 0$ in probability as $n\to \infty$. Furthermore, since $h$ satisfies the linear growth condition, 
\begin{align*}
\E_1[Y_n^2]&=\E_1[\sup\limits_{0\le u\le t}|h(X_u^{x_n})-h(X_u^x)|^2]
\le 3C\E_1[\sup\limits_{0\le u\le t}(1+|X_u^{x_n}|)^2]+3C\E_1[\sup\limits_{0\le u\le t}(1+|X_u^{x}|)^2]
\end{align*}
with constant $C>0$. We have $\sup\limits_n\E_1[Y_n^2]<\infty$ due to
$\E_1[\sup\limits_{0\le u\le t}|X_u^x|^2]\le (1+2x^2)e^{C't}$ for any $x\in\mathcal{I}$ with constant $C'$ depending on Lipschitz and linear growth conditions of parameters (\citet[Thm 3.24]{markovian_sde}).
This implies that $(Y_n)$ is a uniformly integrable sequence and therefore, converges to $0$ in $L^1$ since it also converges to $0$ in probability. Since we used an arbitrary $t$ in the definition of $Y_n$, we obtain
$\lim\limits_{n\to\infty}\E_1[\sup\limits_{t\ge 0}e^{-qt}|h(X_{t}^{x_n})-h(X_{t}^x)|]=0.$
Then, 
\begin{align*}
\lim\limits_{n\to\infty}|v(x_n,1)-v(x,1)|&\le \lim\limits_{n\to\infty}\sup\limits_{\tau\in\S}\E_1[e^{-q\tau}|h(X_{\tau}^{x_n})-h(X_{\tau}^x)|]\le\lim\limits_{n\to\infty}\E_1[\sup\limits_{t\ge 0}e^{-qt}|h(X_{t}^{x_n})-h(X_{t}^x)|]=0
\end{align*}
which proves that $v(x,1)$ is continuous on $\mathcal{I}$.\\ 
Finally, we prove the dynamic programming principle. Since we know that the value functions satisfy the linear growth condition, we have $\bar{v}(X_{T^i},j)<\infty$ for each $j\neq i$; therefore, for any $X_{T^i}$, we can take an $\epsilon$-optimal stopping time $\tau^{\varepsilon}\in\S$ such that 
\[\E_j^{X_{T^i}}[e^{-q\tau^{\varepsilon}}h(X_{\tau^{\varepsilon}})]\ge \bar{v}(X_{T^i},j)-\varepsilon\]
with $\varepsilon>0$.
To indicate that the choice of $\tau^{\varepsilon}$ depends on the starting regime and state, we write $\tau^{\varepsilon}(X_{T^i},j)$. 
Using the shift operator $\theta$, set $\tau':=\tau\1_{\tau<T^1}+(T^1+\tau^{\varepsilon}(X_{T^1},2)\circ\theta_{T^1})\1_{\tau>T^1}$, where $\tau$ is an $\F$-stopping time and $\tau^{\varepsilon}$ denotes the chosen $\varepsilon$-optimal stopping time when the position $X_{T^1}$ is known. That is, 
\[\E_2^{X_{T^1}}[e^{-\tau^{\varepsilon}(X_{T^1},2)q}h(X_{\tau^{\varepsilon}(X_{T^1},2)})]\ge \bar{v}(X_{T^1},2)-\varepsilon \quad \text{almost surely}.\]
Since $\{T^1\le t\}\in \mathcal{F}_t$, $\{\tau'\le t\}\in\mathcal{F}_t$ and $\tau'$ is an $\F$-stopping time. Then, we have
\begin{align*}
v^*(x,1)&\ge \E_1^x[e^{-q\tau'}h(X_{\tau'})]=\E_1^x[e^{-q\tau}h(X_{\tau})\1_{\tau<T^1}+e^{-q(T^1+\tau^{\varepsilon}(X_{T^1},2)\circ\theta_{T^1})}h(X_{T^1+\tau^{\varepsilon}(X_{T^1},2)\circ\theta_{T^1}})\1_{\tau>T^1}]\\
&=\E_1^x[e^{-q\tau}h(X_{\tau})\1_{\tau<T^1}+e^{-qT^1}\E^{X_{T^1}}_2[e^{-q\tau^{\varepsilon}(X_{T^1},2)}h(X_{\tau^{\varepsilon}(X_{T^1},2)})]\1_{\tau>T^1}]\\
&\ge\E_1^x[e^{-q\tau}h(X_{\tau})\1_{\tau<T^1}+e^{-qT^1}\bar{v}(X_{T^1},2)\1_{\tau>T^1}]-\varepsilon
\end{align*}
where we used strong Markov property in the second line. Since $\tau$ was an arbitrary stopping time, we obtain
\begin{equation}\label{eq:DPP>}
v^*(x,1)\ge \sup_{\tau\in \S}\E_1^x[e^{-q\tau}h(X_{\tau})\1_{\tau<T^1}+e^{-qT^1}\bar{v}(X_{T^1},2)\1_{\tau>T^1}].
\end{equation}
Next, we have 
\begin{align*}
\E_1^x[e^{-q\tau}h(X_{\tau})\1_{\tau<T^1}+e^{-q\tau}h(X_{\tau})\1_{\tau>T^1}]&=\E_1^x[e^{-q\tau}h(X_{\tau})\1_{\tau<T^1}+\1_{\tau>T^1}\E_1[e^{-q\tau}h(X_{\tau})\mid \F_{T^1}]]\\
&=\E_1^x[e^{-q\tau}h(X_{\tau})\1_{\tau<T^1}+\1_{\tau>T^1}e^{-qT^1}\E_1[e^{-q(\tau-T^1)}h(X_{\tau})\mid \F_{T^1}]]\\
&\le \E_1^x[e^{-q\tau}h(X_{\tau})\1_{\tau<T^1}+\1_{\tau>T^1}e^{-qT^1}\bar{v}(X_{T^1},2)]
\end{align*}
due to the definition of $\bar{v}$. Therefore, we obtain
\[v^*(x,1)\le  \sup_{\tau\in \S}\E_1^x[e^{-q\tau}h(X_{\tau})\1_{\tau<T^1}+e^{-qT^1}\bar{v}(X_{T^1},2)\1_{\tau>T^1}]\]
This, together with \eqref{eq:DPP>}, proves the dynamic programming principle \eqref{eq:dpp}. The linear growth condition, continuity and the DPP hold similarly for $v^*(x,2)$. \\
We have shown that $v^*(x,i)$ is a solution to one-dimensional optimal stopping problem \eqref{eq:dpp} from which it is obvious that for each $i$, the optimal $\tau$ is a hitting time of the set $\{x\in\mathcal{I}: v^*(x,i)=h(x)\}$ by invoking \citet[Proposition 5.13 ]{DK2003}.
\end{proof}
In Lemma \ref{lem:hitting} we have shown that the supremum over $\tau$ can be taken over first hitting times of $X$. We proceed further 
by using the resolvent operator defined in \eqref{eq:resolvent} (note that $\bar{v}(\cdot,j)\ge 0$, $j=1,2$) and set
\begin{align}\label{eq:g12}
\E_1^x\left[e^{-q T^1}\bar{v}\left(X_{T^1}, 2\right)\right]
&=\int_0^{\infty}\E_1^x\left[e^{-q t}\bar{v}(X_{t}, 2)\1_{T^1\in\diff t}\right]=\int_0^{\infty}\E_1^x\left[e^{-q t}\bar{v}(X_{t}, 2)\mid T^1\in\diff t\right]\p_{x}(T^1\in\diff t)\nn\\
&=\int_0^{\infty}\E_1^x[e^{-q t}\bar{v}(X_{t}, 2)]\cdot \p_{x}(T^1\in\diff t)=\E_1^x\left[\int_0^{\infty}e^{-q t}\bar{v}(X_{t}, 2)\lambda_1e^{-\lambda_1t}\diff t\right]\nn\\ &=\E_1^x\left[\lambda_1\int_0^\infty e^{-(q+\lambda_1)t}\bar{v}(X_t, 2)\diff t\right]=\lambda_1U^{(q+\lambda_1)}\bar{v}(x, 2)=:\bar{g}_{1,2}(x)
\end{align}
where we use the independence of $X_t$ and $T^1$ for the third equality.
Similarly, we define
\begin{equation}\label{eq:g21}
\bar{g}_{2,1}(x):=\lambda_2U^{(q+\lambda_2)}\bar{v}(x, 1)=\E_2^x\left[\lambda_2\int_0^\infty e^{-(q+\lambda_2)t}\bar{v}(X_t, 1)\diff t\right]=\E_2^x\left[e^{-q T^2}\bar{v}\left(X_{T^2}, 1\right)\right].
\end{equation}
Since $T^1$ is exponentially distributed with parameter $\lambda_1$, we can further write the first term in \eqref{eq:dpp} as
\begin{align*}
\E_1^x\left[\1_{\{\tau<T^1\}}e^{-q \tau} h(X_{\tau})\right]
&=
\int_{0}^{\infty}\E_1^x\left[\1_{\{\tau<t\}}e^{-q \tau}h(X_{\tau})\1_{T^1\in\diff t}\right]=\int_{0}^{\infty}\E_1^x\left[\1_{\{\tau<t\}}e^{-q \tau}h(X_{\tau})\mid T^1\in \diff t\right]\cdot\p_x(T^1\in\diff t)\\
&=\int_{0}^{\infty}\E_1^x\left[\1_{\{\tau<t\}}e^{-q \tau}h(X_{\tau})\right]\cdot\p_x(T^1\in\diff t)
=\E_1^x\left[\int_0^{\infty}\1_{\{\tau<t\}}e^{-q \tau}h(X_{\tau})\lambda_1e^{-\lambda_1t}\diff t\right]\\
&=\E_1^x\left[\int_{\tau}^{\infty}e^{-q \tau}h(X_{\tau})\lambda_1e^{-\lambda_1t}\diff t\right]=\E_1^x\left[\lambda_1 e^{-q{\tau}}h(X_{\tau})\int_{\tau}^\infty e^{-\lambda_1 t}\diff t\right]=\E_1^x[e^{-(q+\lambda_1){\tau}}h(X^{(1)}_{\tau})]
\end{align*}
with the process $X^{(1)}$ having drift $\mu(t, 1)$ and diffusion parameter $\sigma(t, 1)$  since all the realizations are in the set of $\{\omega\in \Omega: {\tau}(\omega)<T^1(\omega)\}$.
Next, in view of \eqref{eq:g12}, the second term of \eqref{eq:dpp} is simplified to
\begin{align}\label{eq:second_term}
\E_1^x\left[\1_{\{T^1<\tau\}}e^{-qT^1}\bar{v}(X_{T^1}, 2)\right]
&=\int_0^{\infty}\E_1^x\left[\1_{\{t<\tau\}}e^{-qt}\bar{v}(X_{t}, 2)\1_{T^1\in\diff t}\right]=\int_0^{\infty}\E_1^x\left[\1_{\{t<\tau\}}e^{-qt}\bar{v}(X_{t}, 2)\mid T^1\in\diff t\right]\cdot\p_x(T^1\in\diff t)\nonumber\\
&=\int_0^{\infty}\E_1^x\left[\1_{\{t<\tau\}}e^{-qt}\bar{v}(X_{t}, 2)\right]\cdot\p_x(T^1\in\diff t)\nonumber\\
&=\E_1^x\left[\int_0^{\infty}\1_{\{t<\tau\}}e^{-qt}\bar{v}(X_{t}, 2)\lambda_1e^{-\lambda_1t}\diff t\right]=\E_1^x\left[\int_0^{\tau}\lambda_1e^{-(q+\lambda_1)t}\bar{v}(X_{t}, 2)\diff t\right]\nonumber\\
&=\E_1^x\left[\lambda_1 \int_0^\infty e^{-(q+\lambda_1)t}\bar{v}(X_t, 2)\diff t\right]-\E_1^x\left[\lambda_1 \int_{\tau}^\infty e^{-(q+\lambda_1)t}\bar{v}(X_t, 2)\diff t\right]\nn\\
&=\bar{g}_{1,2}(x)-\E_1^x\left[ e^{-(q+\lambda_1){\tau}}\E_1^{X_{\tau}^{(1)}}\left[\lambda_1\int_0^\infty e^{-(q+\lambda_1)t}\bar{v}(X_t, 2)\diff t\right]\right]\nn\\
&=\bar{g}_{1,2}(x)-\E_1^x[e^{-(q+\lambda_1){\tau}}\bar{g}_{1,2}(X^{(1)}_{\tau})].
\end{align}
Note that $\tau<T^1$ in the second term of the fourth row in \eqref{eq:second_term}. In view of \eqref{eq:g12}, $-\bar{g}_{1,2}(X_{\tau}^{(1)})=-\E_1^{X_{\tau}^{(1)}}\left[e^{-qT^1}\bar{v}(X_{T^1},2)\right]$.  
Now, $\bar{g}_{1,2}(X_{\tau}^{(1)})$ is the expectation evaluated  by the position of $X^{(1)}$ with drift $\mu(\cdot, 1)$ and diffusion parameter $\sigma(\cdot, 1)$ at the stopping time $\tau$ and is the forgone value that one would have obtained if (1) one does \emph{not} stop at time $\tau$ and lets it switch to regime 2 at time $T^1$ and (2) behaves optimally from time $T^1$.\\
By combining the first and the second terms of \eqref{eq:dpp} evaluated above and stressing out the fact that this equation is concerned with a hitting time for $X^{(1)}$, we write
\begin{equation*}\label{eq:v-1}
v^*(x, 1)=\sup_{\tau_1\in\S}\E^x_1\bracks{e^{-(q+\lambda_1)\tau_1}(h-\bar{g}_{1, 2})(X_{\tau_1}^{(1)})}+\bar{g}_{1, 2}(x).
\end{equation*}  Similarly, we can derive
\begin{equation*}\label{eq:v-2}
v^*(x, 2):=\sup_{\tau_2\in\S}\E^x_2\bracks{e^{-(q+\lambda_2)\tau_2}(h-\bar{g}_{2, 1})(X_{\tau_2}^{(2)})}+\bar{g}_{2, 1}(x)
\end{equation*} with $\tau_2$ being a hitting time of the process $X^{(2)}$ with drift $\mu(\cdot, 2)$ and diffusion parameter $\sigma(\cdot, 2)$.
Thus far, we have established the coupled optimal stopping problems
\begin{subequations}\label{eq:couple}
	\begin{align}
	v^*(x, 1)&=\sup_{\tau_1\in\S}\E_1^x\left[e^{-(q+\lambda_1)\tau_1}(h-\bar{g}_{1, 2})(X_{\tau_1}^{(1)})\right]+\bar{g}_{1, 2}(x),\label{eq:couple1}\\
	v^*(x, 2)&=\sup_{\tau_2\in\S}\E_2^x\left[e^{-(q+\lambda_2)\tau_2}(h-\bar{g}_{2, 1})(X_{\tau_2}^{(2)})\right]+\bar{g}_{2, 1}(x),\label{eq:couple2}
	\end{align}
\end{subequations}
where $\bar{g}_{1, 2}$ and $\bar{g}_{2, 1}$ are
\begin{equation}\label{eq:g-couple}
\begin{cases}
\bar{g}_{1, 2}(x)=\E_1^x\left[e^{-q T^1}\bar{v}(X_{T^1}, 2)\right]=\lambda_1U^{(q+\lambda_1)}\bar{v}(x, 2),\\
\bar{g}_{2, 1}(x)=\E_2^x\left[e^{-q T^2}\bar{v}(X_{T^2}, 1)\right]=\lambda_2U^{(q+\lambda_2)}\bar{v}(x, 1).
\end{cases}
\end{equation}
See \eqref{eq:g12} and \eqref{eq:g21}.

With this reduction complete, we prove the main result of this subsection.
\begin{proposition}\label{prop:v-star-version}
	The regime-switching problem \eqref{eq:problem} is equivalent to the following pair of one-dimensional problems
	\begin{subequations}\label{eq:sup-couple}
		\begin{align}
		v^*(x, 1)=\sup_{\tau_1\in\S}\E_1^x\left[e^{-(q+\lambda_1)\tau_1}\left(h(X_{\tau_1}^{(1)})-\lambda_1U^{(q+\lambda_1)}v^*(X_{\tau_1}^{(1)}, 2)\right)\right]+\lambda_1U^{(q+\lambda_1)}v^*(x, 2), \label{eq:sup-couple1}\\
		v^*(x, 2)=\sup_{\tau_2\in\S}\E_2^x\left[e^{-(q+\lambda_2)\tau_2}\left(h(X_{\tau_2}^{(2)})-\lambda_2U^{(q+\lambda_2)}v^*(X_{\tau_2}^{(2)}, 1)\right)\right]+\lambda_2U^{(q+\lambda_2)}v^*(x, 1).\label{eq:sup-couple2}
		\end{align}
	\end{subequations}
	In other words, $\bar{v}(x, i), i=1, 2$ in \eqref{eq:g-couple} can be replaced by $v^*(x, i), i=1, 2$.\\ 
	The function value $v^*(X_{\tau_1}^{(1)}, 2)$ in \eqref{eq:sup-couple1} means the following: we stop $X^{(1)}$ at $\tau_1$ and evaluate $v^*(x,2)$ at $x=X_{\tau_1}^{(1)}$. Similarly, the function value $v^*(X_{\tau_2}^{(2)}, 1)$ in \eqref{eq:sup-couple2} means that we stop $X^{(2)}$ at $\tau_2$ and evaluate $v^*(x,1)$ at $x=X_{\tau_2}^{(2)}$.
\end{proposition}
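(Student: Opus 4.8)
The plan is to reduce everything to the single identity $\bar v(\cdot,j)=v^*(\cdot,j)$ on $\mathcal{I}$: once this is granted, the pair \eqref{eq:sup-couple} follows from the reduction \eqref{eq:couple}--\eqref{eq:g-couple} already carried out above simply by replacing $\bar v$ with $v^*$ in \eqref{eq:g-couple}. So the only substantive point to prove is that identity.

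To establish $\bar v(\cdot,j)=v^*(\cdot,j)$, I would start from the fact that $\bar v(x,j)=\sup_{\tau\in\S}\E_j^x[e^{-q\tau}h(X_\tau)]$ is, by its very definition, the value of the regime-switching problem \eqref{eq:problem} restarted from the initial data $(x,j)$; the subtle point is that in the DPP \eqref{eq:dpp} this object is evaluated at the random position $X_{T^i}$ under the information $\F_{T^i}$. I would resolve this with the strong Markov property of the joint Markov process $(X,\eta)$ at the stopping time $T^i$, together with the memorylessness of the exponential switching clock: conditionally on $X_{T^i}$, the post-$T^i$ trajectory is a copy of the regime-switching diffusion launched from $(X_{T^i},j)$ and independent of $\F_{T^i}$. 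Hence $\E_i^x[e^{-q(\tau-T^i)}h(X_\tau)\mid\F_{T^i}]\le v^*(X_{T^i},j)$ for every $\tau>T^i$, while pasting after $T^i$ an $\eps$-optimal stopping rule for $v^*(X_{T^i},j)$ (exactly as in the proof of Lemma \ref{lem:hitting}) attains this bound up to $\eps$; therefore $\bar v(X_{T^i},j)=v^*(X_{T^i},j)$ almost surely, and since $X_{T^i}$ exhausts $\mathcal{I}$ the two functions coincide on $\mathcal{I}$.

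It then remains to substitute and to confirm that the reduction performed in \eqref{eq:g12} and \eqref{eq:second_term} was legitimate. Inserting $\bar v(\cdot,j)=v^*(\cdot,j)$ into \eqref{eq:g-couple} turns $\bar g_{1,2}$ and $\bar g_{2,1}$ into $\lambda_1U^{(q+\lambda_1)}v^*(\cdot,2)$ and $\lambda_2U^{(q+\lambda_2)}v^*(\cdot,1)$, and feeding these into \eqref{eq:couple1}--\eqref{eq:couple2} produces precisely \eqref{eq:sup-couple1}--\eqref{eq:sup-couple2}. For finiteness I would use that $v^*(\cdot,j)$ satisfies the linear growth condition by Lemma \ref{lem:hitting} and that $q+\lambda_i>q>\beta$, so the estimate \eqref{eq:bounds} gives $U^{(q+\lambda_i)}v^*(x,j)<\infty$ for all $x\in\mathcal{I}$ exactly as in the proof of \eqref{eq:Uh-finite}; this legitimizes the interchange of $\sup_\tau$, expectation and the integral against the exponential density, the splitting $\int_0^\tau=\int_0^\infty-\int_\tau^\infty$ together with the strong Markov property applied to $X^{(1)}$ at $\tau$ on $\{\tau<T^1\}$ (where $X$ agrees with $X^{(1)}$), and the use of Tonelli on the nonnegative integrands.

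The step I expect to be the main obstacle is making this last bookkeeping airtight, in particular running the $\eps$-optimal pasting and the strong-Markov/Fubini interchanges of \eqref{eq:second_term} simultaneously; the only ingredients that license them are the growth control of Assumption \ref{assump1} and the strict inequality $q>\beta$. If ``equivalent'' in the proposition is to be read as also asserting that \eqref{eq:sup-couple} pins down the pair $(v^*(\cdot,1),v^*(\cdot,2))$ uniquely, I would append a brief fixed-point argument: after rewriting each supremum as in \eqref{eq:second_term}, the right-hand sides of \eqref{eq:sup-couple1}--\eqref{eq:sup-couple2}, viewed as a map $v^*(\cdot,1)\mapsto v^*(\cdot,2)\mapsto v^*(\cdot,1)$ on the space of continuous functions of linear growth with the weighted norm $\|f\|=\sup_x|f(x)|/(1+|x|)$, is a contraction, the contraction factor being controlled by $\lambda_i/(q+\lambda_i-\beta)<1$; hence \eqref{eq:sup-couple} has a unique solution, necessarily $(v^*(\cdot,1),v^*(\cdot,2))$. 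Alternatively one may appeal to the variational inequality of Proposition \ref{prop:pham} regime by regime.
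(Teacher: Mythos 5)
Your argument is correct, but it takes a genuinely different route from the paper's. You attack the identity $\bar v(\cdot,j)=v^*(\cdot,j)$ head-on, via the strong Markov property of $(X,\eta)$ at $T^i$ and $\eps$-optimal pasting --- in effect re-running the proof of the DPP \eqref{eq:dpp} with $v^*(\cdot,j)$ in place of $\bar v(\cdot,j)$ --- and then push the substitution through the resolvent computations of \eqref{eq:g12} and \eqref{eq:second_term}. The paper instead keeps $\bar v$ and $v^*$ as a priori distinct objects, uses only the one-sided inequality $v^*(\cdot,2)\ge\bar v(\cdot,2)$ together with $J(x,1)\le v^*(x,1)$ from Lemma \ref{lem:hitting}, and closes the gap with a sandwich argument: it bounds $J(x,1)-v^*(x,1)$ from below by evaluating the supremum in \eqref{eq:sup-couple1} at the optimizer $\bar\tau_1$ of \eqref{eq:couple1}, and then uses the strong Markov property to collapse the resulting difference of resolvents into
\begin{equation*}
\lambda_1\,\E_1^x\!\left[\int_0^{\bar\tau_1}e^{-(q+\lambda_1)t}\bigl(v^*_2(X_t^{(1)})-\bar v_2(X_t^{(1)})\bigr)\diff t\right]\ge 0 .
\end{equation*}
The paper's route buys economy: Lemma \ref{lem:hitting} is used as a black box, the pasting construction and the Fubini/interchange bookkeeping you flag as the main obstacle never have to be reopened, and monotonicity of the reduced problem in the continuation-value function does all the work. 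Your route buys transparency: it identifies the two functions rather than merely the two suprema, and it exhibits the proposition as an immediate corollary of the DPP. Your finiteness remarks (linear growth of $v^*$ plus $q>\beta$, hence $U^{(q+\lambda_i)}v^*(x,j)<\infty$) match what the paper relies on implicitly. The closing fixed-point/uniqueness addendum is not required for the statement as intended here --- ``equivalent'' only asserts that the pair $(v^*(\cdot,1),v^*(\cdot,2))$ satisfies \eqref{eq:sup-couple} --- though a contraction argument of that flavour does appear later in the paper on a different space.
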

\begin{proof}
	By the symmetry, it suffices to prove the equivalence of  \eqref{eq:couple1} and  \eqref{eq:sup-couple1}.  Set the right-hand side of \eqref{eq:sup-couple1} as $J(x, 1)$.  Let the optimal times of \eqref{eq:couple1} and \eqref{eq:sup-couple1} be $\bar{\tau}_1$ and $\acute{\tau}_1$, respectively. By the definition of $\bar{v}(\cdot, 2)$ we have $J(x, 1)\le v^*(x, 1)$ (see Lemma \ref{lem:hitting}). In this proof, we write $\bar{v}_2(x):=\bar{v}(x, 2)$ and $v^*(x, 2):=v^*_2(x)$ for brevity. We will show $J(x, 1)\ge v^*(x, 1)$:
	\begin{align*}
	&J(x, 1)-v^*(x, 1)\\
	&=\lambda_1U^{(q+\lambda_1)}(v^*_2(x)-\bar{v}_2(x))+\E_1^x\left[e^{-(q+\lambda_1)\acute{\tau}_1}\left(h-\lambda_1U^{(q+\lambda_1)}v_2^*\right)(X_{\acute{\tau}_1}^{(1)})-e^{-(q+\lambda_1)\bar{\tau}_1}\left(h-\lambda_1U^{(q+\lambda_1)}\bar{v}_2\right)(X_{\bar{\tau}_1}^{(1)})\right]\\
	&\ge \lambda_1U^{(q+\lambda_1)}(v^*_2(x)-\bar{v}_2(x))+\E_1^x\left[e^{-(q+\lambda_1)\bar{\tau}_1}\left(h-\lambda_1U^{(q+\lambda_1)}v_2^*\right)(X_{\bar{\tau}_1}^{(1)})-e^{-(q+\lambda_1)\bar{\tau}_1}\left(h-\lambda_1U^{(q+\lambda_1)}\bar{v}_2\right)(X_{\bar{\tau}_1}^{(1)})\right]\\
	&=\lambda_1U^{(q+\lambda_1)}(v^*_2(x)-\bar{v}_2(x))-\lambda_1\E_1^x\left[\int_{\bar{\tau}_1}^\infty e^{-(q+\lambda_1)t} v^*_2(X_t^{(1)})\diff t-\int_{\bar{\tau}_1}^\infty e^{-(q+\lambda_1)t} \bar{v}_2(X_t^{(1)})\diff t\right]
	\end{align*} where the inequality in the third line is due to the definition of $\acute{\tau}_1$ and the equality in the fifth line holds by the strong Markov property. Since $v^*(\cdot, 2)$ is the value function of the original problem \eqref{eq:problem}, we have $v^*(x, 2)\ge \bar{v}(x, 2)$. Therefore, we have
	\begin{align*}
	J(x, 1)-v^*(x, 1)
	&\ge\lambda_1\left(U^{(q+\lambda_1)}(v^*_2(x)-\bar{v}_2(x))-\E_1^x\left[\int_{\bar{\tau}_1}^\infty e^{-(q+\lambda_1)t} (v^*_2(X_t^{(1)})-\bar{v}_2(X_t^{(1)}))\diff t\right]\right)\\
	&=\lambda_1\E_1^x\left[\int_0^{\bar{\tau}_1}e^{-(q+\lambda_1)t}(v^*_2(X_t^{(1)})-\bar{v}_2(X_t^{(1)}))\diff t\right]\ge 0,
	\end{align*}
	which completes the proof.
\end{proof}
\begin{remark}\begin{normalfont}
If we assume that the value of $q$ also changes depending on the regime, Proposition \ref{prop:v-star-version} still holds with $q$ replaced by $q_1$ in \eqref{eq:sup-couple1} and by $q_2$ in \eqref{eq:sup-couple2}, as long as $q_i$ satisfies Assumption \ref{assump1} for each $i$. All the subsequent analysis in the paper also holds with $q$ replaced by $q_1$ and $q_2$ in an obvious way.\end{normalfont}
\end{remark}

\subsection{The form of the value functions}\label{subsec:form}
Now we will investigate  the state space $\II=(\ell, r)$. We define $A_1:=\Gamma_1 \cap \Gamma_2, \; A_2:=\mathrm{C}_1 \cap \Gamma_2, \; A_3:=\mathrm{C}_2 \cap \Gamma_1, \; A_4:=\mathrm{C}_1\cap \mathrm{C}_2$. Then, 
\[\II=(\Gamma_1 \cap \Gamma_2) \cup (\mathrm{C}_1 \cap \Gamma_2)\cup (\mathrm{C}_2 \cap \Gamma_1)\cup (\mathrm{C}_1\cap \mathrm{C}_2)=A_1 \cup A_2 \cup A_3\cup A_4.\]
Below, we write down the form of the value functions in each region. Each $A_j$ may be disconnected; however, the form of the value functions derived below remains true in all parts of $A_j$. Let us denote by $\G_i, i=1, 2$ the generators with drift  $\mu(\cdot, i)$ and diffusion parameter $\sigma(\cdot, i)$. First, note that \eqref{eq:sup-couple} has to be solved simultaneously and that the value functions $v^*(x, i)$ have different forms in $A_j, j=1, 2, 3, 4$. Consequently, the reward functions in \eqref{eq:sup-couple} are not continuous on $\II$ but are piecewise continuous on each $A_j$. Using this fact, we will prove the following proposition:

\begin{proposition}\label{prop:regions}
	In region $A_1=\Gamma_1\cap \Gamma_2$, the value functions of \eqref{eq:sup-couple} are $v^*(x, 1)= v^*(x, 2)=h(x)$.  In region $A_2=\mathrm{C}_1 \cap \Gamma_2$, the value functions are
\[\begin{cases}v^*(x, 1)=u_1(x)+\lambda_1 U^{(q+\lambda_1)}h(x),\\
v^*(x, 2)=h(x),\end{cases}\]
	where $u_1$ satisfies \begin{equation}\label{eq:qvi-1}
	(q+\lambda_1 -\G_1)u_1 =0.
	\end{equation}   For region $A_3=\mathrm{C}_2 \cap \Gamma_1$, we replace the roles of $i=1$ and $i=2$ and $u_2$ satisfies \begin{equation}\label{eq:qvi-2}
	(q+\lambda_2-\G_2)u_2 =0.
	\end{equation}
	In region $A_4=\mathrm{C}_1\cap \mathrm{C}_2$, the value functions are
	\[
	\begin{cases}v^*(x, 1)=\hat{u}_1(x)+\lambda_1 U^{(q+\lambda_1)}v_2^p(x),\\
	v^*(x, 2)=\hat{u}_2(x)+\lambda_2 U^{(q+\lambda_2)}v_1^p(x),
	\end{cases}
	\] where $\hat{u}_i$ solves $(q+\lambda_i-\G_i)\hat{u}_i=0$, $i=1,2$, $v_1^p(x)$ satisfies
	\[(q+\lambda_2-\G_2) (q+\lambda_1-\G_1)v_1^p(x)=\lambda_1 \lambda_2 v_1^p(x),\] and $v_2^p(x)$ satisfies
	\[  (q+\lambda_1-\G_1) (q+\lambda_2-\G_2)v_2^p(x)=\lambda_2 \lambda_1 v_2^p(x).\]
	Moreover, $\hat{u}_1(x)=\hat{u}_2(x)\equiv0$ on $A_4$.
\end{proposition}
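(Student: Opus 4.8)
The plan is to combine the reduction of Proposition~\ref{prop:v-star-version} with the $F$-concavity and ODE characterization of optimal stopping for linear diffusions recalled in subsection~\ref{subsec:basic-fact}, treating the four regions one at a time. The fact that drives everything is a consequence of \eqref{eq:couple1}: the function $u_1:=v^*(\cdot,1)-\lambda_1U^{(q+\lambda_1)}v^*(\cdot,2)$ is the value function of an \emph{ordinary}, non-switching optimal stopping problem for $X^{(1)}$ with discount rate $q+\lambda_1$ and reward $h-\lambda_1U^{(q+\lambda_1)}v^*(\cdot,2)$, and its continuation region is exactly $\mathrm{C}_1=\{v^*(\cdot,1)>h\}$ (symmetrically for regime $2$). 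Hence, by \citet[Propositions 5.12 and 5.13]{DK2003}, $u_1$ is $F_1$-concave, continuous, and classically $C^2$ on $\mathrm{C}_1$, where $(q+\lambda_1-\G_1)u_1=0$. Since $v^*(\cdot,2)$ is continuous (Lemma~\ref{lem:hitting}) and $q+\lambda_1>\beta$, one has $(q+\lambda_1-\G_1)U^{(q+\lambda_1)}v^*(\cdot,2)=v^*(\cdot,2)$ classically, and therefore the two key identities $(q+\lambda_1-\G_1)v^*(\cdot,1)=\lambda_1v^*(\cdot,2)$ on $\mathrm{C}_1$ and $(q+\lambda_2-\G_2)v^*(\cdot,2)=\lambda_2v^*(\cdot,1)$ on $\mathrm{C}_2$. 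Region $A_1=\Gamma_1\cap\Gamma_2$ is then immediate from the definitions of $\Gamma_1,\Gamma_2$: $v^*(x,1)=v^*(x,2)=h(x)$ there.

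On $A_2=\mathrm{C}_1\cap\Gamma_2$ (region $A_3$ being symmetric) we have $v^*(\cdot,2)=h$ because $A_2\subseteq\Gamma_2$, so the first key identity reads $(q+\lambda_1-\G_1)v^*(\cdot,1)=\lambda_1h$ there. Under Assumption~\ref{assump1}, with $q$ replaced by $q+\lambda_1$ so that \eqref{eq:Uh-finite} still applies, $\lambda_1U^{(q+\lambda_1)}h$ is well defined and $C^2$, and it is a particular solution, since $(q+\lambda_1-\G_1)\lambda_1U^{(q+\lambda_1)}h=\lambda_1h$. Hence $u_1:=v^*(\cdot,1)-\lambda_1U^{(q+\lambda_1)}h$ solves the homogeneous equation \eqref{eq:qvi-1} on $A_2$, while $v^*(\cdot,2)=h$; interchanging $1$ and $2$ gives the form on $A_3$ with \eqref{eq:qvi-2}.

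On $A_4=\mathrm{C}_1\cap\mathrm{C}_2$ both key identities hold at once. Because the resolvent raises smoothness by two orders, $v^*(\cdot,1)=u_1+\lambda_1U^{(q+\lambda_1)}v^*(\cdot,2)$ is $C^2$ on $\mathrm{C}_1$ and a bootstrap makes both $v^*(\cdot,1),v^*(\cdot,2)$ of class $C^\infty$ on $A_4$; thus applying $(q+\lambda_2-\G_2)$ to $(q+\lambda_1-\G_1)v^*(\cdot,1)=\lambda_1v^*(\cdot,2)$ and substituting the second identity yields $(q+\lambda_2-\G_2)(q+\lambda_1-\G_1)v^*(\cdot,1)=\lambda_1\lambda_2v^*(\cdot,1)$ on $A_4$, and symmetrically for $v^*(\cdot,2)$. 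Choosing $v_1^p,v_2^p$ to be solutions of these fourth-order ODEs that coincide with $v^*(\cdot,1),v^*(\cdot,2)$ on $A_4$, and using once more $(q+\lambda_1-\G_1)\lambda_1U^{(q+\lambda_1)}v_2^p=\lambda_1v_2^p$ together with $v_2^p=v^*(\cdot,2)$ on $A_4$, we find that $\hat u_1:=v^*(\cdot,1)-\lambda_1U^{(q+\lambda_1)}v_2^p$ satisfies $(q+\lambda_1-\G_1)\hat u_1=0$ on $A_4$, and symmetrically for $\hat u_2$. This is the stated representation.

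The identification $\hat u_1\equiv\hat u_2\equiv0$ on $A_4$ is the delicate part, and I expect it to be the main obstacle. The route I would take is to pin down $v_2^p$ by extending it across $\II$ as $v_2^p:=\lambda_1^{-1}(q+\lambda_1-\G_1)v^*(\cdot,1)$, which on $A_4$ equals $v^*(\cdot,2)$ by the first key identity (hence solves the regime-$2$ fourth-order ODE there), and then invoke the resolvent identity $U^{(q')}(q'-\G_1)g=g$ with $g=v^*(\cdot,1)$ to conclude $\lambda_1U^{(q+\lambda_1)}v_2^p=v^*(\cdot,1)$, i.e.\ $\hat u_1\equiv0$; the same argument with $1$ and $2$ swapped gives $\hat u_2\equiv0$. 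That identity is valid for $g$ of linear growth with $q+\lambda_1>\beta$, provided $g$ is smooth enough for It\^o's formula to apply with no local-time correction; making this rigorous across the free boundaries $\partial\mathrm{C}_i$ when $h$ is only continuous is where the real work lies. The one further point, routine in view of the $F$-concavity of the reduced value functions, is the classical $C^2$ regularity of $v^*(\cdot,i)$ in the interior of each region, which is what justifies all the pointwise ODE manipulations used above.
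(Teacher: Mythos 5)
Your treatment of $A_1$, $A_2$, $A_3$ and the derivation of the fourth-order ODEs on $A_4$ matches the paper's: apply $(q+\lambda_i-\G_i)$ to \eqref{eq:sup-couple}, use Proposition \ref{prop:pham} to see that the supremum term is killed by $(q+\lambda_i-\G_i)$ on $\mathrm{C}_i$, identify $\lambda_1U^{(q+\lambda_1)}h$ as a particular solution on $A_2$, and compose the two operators on $A_4$. The problem is the step you yourself flag as ``where the real work lies,'' namely $\hat u_1=\hat u_2\equiv 0$. Your route defines $v_2^p:=\lambda_1^{-1}(q+\lambda_1-\G_1)v^*(\cdot,1)$ \emph{globally} and then inverts with $U^{(q+\lambda_1)}(q+\lambda_1-\G_1)g=g$. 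This requires $v^*(\cdot,1)$ to lie in the domain of $\G_1$ on all of $\II$, but on $\Gamma_1$ we have $v^*(\cdot,1)=h$ with $h$ merely continuous, and across $\partial\mathrm{C}_1$ one has at best smooth fit; so the object you are inverting is not defined where you need it, and you do not supply the It\^o/local-time argument that would repair this. Moreover, even granting global smoothness, the chain ``define $v_2^p$ as $\lambda_1^{-1}(q+\lambda_1-\G_1)v^*(\cdot,1)$, then invert to get $\lambda_1U^{(q+\lambda_1)}v_2^p=v^*(\cdot,1)$'' is a tautological use of the resolvent identity: it holds for \emph{any} sufficiently smooth $g$ and does not tie your globally defined $v_2^p$ to the coupled pair in the statement, which must also satisfy $v_2^p=\lambda_2U^{(q+\lambda_2)}v_1^p$ and the regime-2 fourth-order ODE on $A_4$ with finite resolvents. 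So the claim $\hat u_i\equiv0$ is not actually proved.

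The paper avoids differentiating $v^*$ outside the continuation regions entirely. It fixes continuous nonnegative extensions $M_i$ of $v^*(\cdot,i)|_{A_4}$ with finite resolvents, sets $v_1^p:=\lambda_1U^{(q+\lambda_1)}M_2$, $v_2^p:=\lambda_2U^{(q+\lambda_2)}M_1$, and writes $v^*(\cdot,i)=\hat u_i+v_i^p$ on $A_4$. Substituting $M_2=\hat u_2+v_2^p$ into $v_1^p$ and using $v_1^p=\lambda_1\lambda_2U^{(q+\lambda_1)}U^{(q+\lambda_2)}M_1$ produces a second expression $v^*(\cdot,1)=\hat u_1+\lambda_1U^{(q+\lambda_1)}\hat u_2+v_1^p$; comparing the two forces $\lambda_1U^{(q+\lambda_1)}\hat u_2\equiv0$, hence $\hat u_2\equiv0$ by continuity, and symmetrically $\hat u_1\equiv0$. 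This is a purely algebraic self-consistency argument with resolvents of continuous functions, needing none of the boundary regularity your inversion step presupposes. You should replace your final paragraph with an argument of this kind (or supply the missing regularity analysis, which is substantially harder than the rest of the proof).
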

\begin{proof}
	The first statement is obvious. The reward function in \eqref{eq:sup-couple1} is $h(x)-\lambda_1U^{(q+\lambda_1)}v^*(x, 2)$ for $x\in \II$.  Recall that for $x\in A_2$, $v^*(x, 2)=h(x)$ and for $x\in A_4$, $v^*(x, 2):=M_2(x)\neq h(x)$ whose form is determined later in this proof. For the function $M_2$, due to the continuity and nonnegativity of $v^*(x,2)$, we can take a continuous nonnegative function that coincides with $v^*(x, 2)$ on $A_4$. Consequently, the reward function for $v^*(x,1)$ becomes
	\[h(x)-\lambda_1U^{(q+\lambda_1)}h(x), \quad x\in A_2 \conn h(x)-\lambda_1U^{(q+\lambda_1)}M_2(x), \quad x\in A_4;\]
	therefore, it is not continuous on $\II$. Note that we need to choose $M_2(x)$ in a way that $\lambda_1U^{(q+\lambda_1)}M_2(x)<\infty,$ $x\in \II$ (see subsection \ref{subsec:ode-solution}). Similarly, we take a continuous nonnegative function $M_1(x)$ that coincides with $v^*(x,1)$ on $A_4$ and satisfies $\lambda_2U^{(q+\lambda_2)}M_1(x)<\infty, \;\; x\in \II$.
	
	Let $P_t^i$ denote the transition semigroup for $X^{(i)}$. For $x, y \in \II$ , we have
	\[
	\left|U^{(q+\lambda_1)}h(x)-U^{(q+\lambda_1)}h(y)\right|\le \int_0^\infty e^{-(q+\lambda_1)t}\left|\E_1^x[h(X_t^{(1)})]-\E_1^y[h(X_t^{(1)})]\right|\diff t=\int_0^\infty e^{-(q+\lambda_1)t}\left|P^1_th(x)-P^1_th(y)\right|\diff t\]
	and 
	\[
	\left|U^{(q+\lambda_1)}M_2(x)-U^{(q+\lambda_1)}M_2(y)\right|\le \int_0^\infty e^{-(q+\lambda_1)t}\left|\E_1^x[M_2(X_t^{(1)})]-\E_1^y[M_2(X_t^{(1)})]\right|\diff t=\int_0^\infty e^{-(q+\lambda_1)t}\left|P^1_tM_2(x)-P^1_tM_2(y)\right|\diff t,\]
	from which the continuity of the resolvents  follows due to the continuity of $h$ and $M_2$. Together with the assumed continuity of $h$, this shows the continuity of the reward functions $h-\lambda_1U^{(q+\lambda_1)}h$ and $h-\lambda_1U^{(q+\lambda_1)}M_2$ on $\II$.
	Furthermore, $U^{(q+\lambda_1)}h(x)$ satisfies the linear growth condition due to Assumption \ref{assump1}. This is because
	\begin{align*}\label{eq:resolvent_linearity}
	\left|U^{(q+\lambda_1)}h(x)\right|
	&\le \E_1^x\left[\int_0^{\infty}e^{-(q+\lambda_1)t}|h(X_t^{(1)})|\diff t\right]\le C\E_1^x\left[\int_0^{\infty}e^{-(q+\lambda_1)t}(1+|X_t^{(1)}|)\diff t\right]\\&\le C\int_0^{\infty}e^{-(q+\lambda_1)t}\left(1+\sqrt{\E_1^x\left[(X_t^{(1)})^2\right]}\right)\diff t
	\le C\int_0^{\infty}e^{-(q+\lambda_1)t}\left(1+\sqrt{De^{Dt}(1+x^2)}\right)\diff t\\&\le C\int_0^{\infty}e^{-(q+\lambda_1)t}\left(1+\sqrt{D}e^{\frac{D}{2}t}(1+|x|)\right)\diff t
	=\frac{C}{q+\lambda_1}+C\sqrt{D}(1+|x|)\frac{1}{q+\lambda_1-\frac{D}{2}}\\&\leq \left(C\sqrt{D}\frac{1}{q+\lambda_1-\frac{D}{2}}+\frac{C}{q+\lambda_1}\right)(1+|x|), \quad x\in\II
	\end{align*}
	where $C$ and $D$ are positive constants (see \citet[Theorem 2.9, Chapter 5]{karatzas-shreve-book1} and \citet[Lemma 3.1]{pham2007}). This guarantees that the reward function $h-\lambda_1U^{(q+\lambda_1)}h$ satisfies the linear growth condition. Since $M_1$ and $M_2$ are nonnegative functions, we see that the reward functions $h-\lambda_1U^{(q+\lambda_1)}M_2$ and $h-\lambda_2U^{(q+\lambda_2)}M_1$ also satisfy the linear growth condition. 

Now that the reward function in \eqref{eq:sup-couple} is known to satisfy the linear growth condition, we invoke Proposition \ref{prop:pham} to conclude that
$\sup_{\tau_1\in\S}\E_1^x[e^{-(q+\lambda_1)\tau_1}(h(X_{\tau_1}^{(1)})-\lambda_1U^{(q+\lambda_1)}v^*(X_{\tau_1}^{(1)},2))]$ solves $(q+\lambda_1-\G_1)u =0$ in $A_2$ and $A_4$.
	Similarly, we can show that   $\sup_{\tau_2\in\S}\E_2^x[e^{-(q+\lambda_2)\tau_2}(h(X_{\tau_2}^{(2)})-\lambda_2U^{(q+\lambda_2)}v^*(X_{\tau_2}^{(2)},1))]$ solves $(q+\lambda_2-\G_2)u=0$ in $A_3$ and $A_4$.
	
	Since $(q+\lambda_1-\G_1)\lambda_1U^{(q+\lambda_1)}v^*(x,2)=\lambda_1v^*(x,2)$, by applying $(q+\lambda_1-\G_1)$ on \eqref{eq:sup-couple1}, we obtain from Proposition \ref{prop:v-star-version}
	\[
	(q+\lambda_1-\G_1)v^*(x,1)=\lambda_1v^*(x,2) \quad x\in A_2, \: x\in A_4\]
	In $A_2$, $v^*(x,2)=h(x)$ and we have
	$(q+\lambda_1-\G_1)v^*(x,1)=\lambda_1h(x)$.
	Therefore, we obtain \[v^*(x,1)=u_1(x)+\lambda_1U^{q+\lambda_1}h(x)\quad x\in A_2\]
	with $u_1$ satisfying $(q+\lambda_1-\G_1)u_1(x)=0$ in $x\in A_2$.	
	Similarly, for $v^*(x,2)$ first we have
	$(q+\lambda_2-\G_2)v^*(x,2)=\lambda_2v^*(x,1)$ for  $x\in A_3, \: x\in A_4$.
	In $A_3$, $v^*(x,1)=h(x)$. Therefore, we obtain $v^*(x,2)=u_2(x)+\lambda_2U^{q+\lambda_2}h(x) \quad x\in A_3,$ where $u_2$ satisfies $(q+\lambda_2-\G_2)u_2(x)=0$ for $x\in A_3$. This completes the proof for regions $A_2$ and $A_3$.
	
	Finally, for  $x\in A_4$, the value functions can be found as solutions to the following paired optimal stopping problems:
	\[v^*(x,1)=\sup_{\tau_1\in\S}\E_1^x\left[e^{-(q+\lambda_1)\tau_1}\left(h(X_{\tau_1}^{(1)})-\lambda_1U^{(q+\lambda_1)}M_2(X_{\tau_1}^{(1)})\right)\right]+\lambda_1U^{(q+\lambda_1)}M_2(x)\]
	\[v^*(x,2)=\sup_{\tau_2\in\S}\E_2^x\left[e^{-(q+\lambda_2)\tau_2}\left(h(X_{\tau_2}^{(2)})-\lambda_2U^{(q+\lambda_2)}M_1(X_{\tau_2}^{(2)})\right)\right]+\lambda_2U^{(q+\lambda_2)}M_1(x)\]
	Let us set \[v_1^p(x):=\lambda_1U^{(q+\lambda_1)}M_2(x) \quad \text{and} \quad v_2^p(x):=\lambda_2U^{(q+\lambda_2)}M_1(x).\]
	We have for $i=1,2$
	\begin{equation}\label{eq:newly-added-vstar}
	v^*(x,i)=\hat{u}_i(x)+v_i^p(x), \quad x\in A_4.
	\end{equation}
	Furthermore,
	\begin{subequations}\label{eq:GRnew}
		\begin{equation}\label{eq:GR1new}
		(q+\lambda_1-\G_1)v_1^p(x)=\lambda_1 M_2(x)
		\end{equation}
		and
		\begin{equation}\label{eq:GR2new}
		(q+\lambda_2-\G_2)v_2^p(x)=\lambda_2 M_1(x).
		\end{equation}
	\end{subequations}
	Applying $(q+\lambda_2-\G_2)$ on \eqref{eq:GR1new}, we have for $x\in A_4$
	\begin{align*}
	(q+\lambda_2-\G_2) (q+\lambda_1-\G_1)v_1^p(x)&=\lambda_1(q+\lambda_2-\G_2)M_2(x)\\
	&=\lambda_1(q+\lambda_2-\G_2)(\hat{u}_2(x)+v_2^p(x))\\
	&=\lambda_1\lambda_2 M_1(x)
	\end{align*}
	where we used the definition of $\hat{u}_2(x)$ and \eqref{eq:GR2new} in the last equality. This implies, by the definition of $\hat{u}_1(x)$,
	\begin{align}\label{eq:ode11}
	(q+\lambda_2-\G_2) (q+\lambda_1-\G_1)v^*(x, 1)=\lambda_1 \lambda_2 M_1(x)=\lambda_1 \lambda_2 v^*(x, 1), \quad x\in A_4,
	\end{align}
	from which
	\begin{equation}\label{eq:v1-recursive-new}
	v_1^p(x)=\lambda_1\lambda_2U^{(q+\lambda_1)}U^{(q+\lambda_2)}M_1(x) \quad x\in A_4
	\end{equation}
	\emph{provided that the resolvents can be defined and are finite}.
	Similarly, we obtain
	\begin{align}\label{eq:ode22}
	(q+\lambda_1-\G_1) (q+\lambda_2-\G_2)v^*(x, 2)=\lambda_2 \lambda_1 v^*(x, 2), \quad x\in A_4.
	\end{align}
	Then in region $A_4$, we have
	\begin{align*}
	v^*(x, 1)&=\hat{u}_1(x)+\lambda_1U^{(q+\lambda_1)}M_2(x)\\
	&=\hat{u}_1(x)+\lambda_1U^{(q+\lambda_1)}\hat{u}_2(x)+\lambda_1U^{(q+\lambda_1)}v_2^p(x)\\
	&=\hat{u}_1(x)+\lambda_1U^{(q+\lambda_1)}\hat{u}_2(x)+\lambda_1\lambda_2U^{(q+\lambda_1)}U^{(q+\lambda_2)}M_1(x)\\
	&=\hat{u}_1(x)+\lambda_1U^{(q+\lambda_1)}\hat{u}_2(x)+v_1^p(x)
	\end{align*}
	where we used \eqref{eq:v1-recursive-new} in the last equality. For the final equality to hold, in view of \eqref{eq:newly-added-vstar} we need  $\lambda_1U^{(q+\lambda_1)}\hat{u}_2(x)=0$.  Since $\lambda_1>0$, it follows that $\hat{u}_2(x)=0$ for almost every $x$ in region $A_4$.  But the continuity of $v^*(x, 2)$ and hence of $\hat{u}_2(x)$ entails that $\hat{u}_2(x)=0$ for all $x$ in region $A_4$. The same argument also shows that $\hat{u}_1(x)=0$ for all $x$ in region $A_4$.
\end{proof}
We shall denote $v^*(x,i)$, $i=1,2$ in region $A_4$ by $v_i^p(x)$. In view of $\hat{u}_i(x)\equiv0$, $i=1,2$,
\begin{equation}\label{eq:coupled-vp}
\begin{cases}
v^*(x,1)=v^p_1(x)=\lambda_1U^{q+\lambda_1}v_2^p(x)\\
v^*(x,2)=v^p_2(x)=\lambda_2U^{q+\lambda_2}v_1^p(x)
\end{cases}
\end{equation}
and $v_i^p(x)$ solve \eqref{eq:ode11} and \eqref{eq:ode22}.

\subsection{On the solution of \eqref{eq:ode11}}\label{subsec:ode-solution}
Once we solve \eqref{eq:ode11} to find $v^p_1(x)$, we can have $v^p_2(x)$ immediately, provided that the resolvent $\lambda_2U^{(q+\lambda_2)}v_1^p(x)<\infty$. The finiteness of the resolvent should be one of the necessary conditions when finding the solution since the value functions are continuous and satisfy the linear growth condition on $\mathcal{I}$. Moreover, \eqref{eq:coupled-vp} becomes

\begin{align}\label{eq:couple-recursive-A4}
\begin{cases}
v_1^p(x)=\lambda_1 \lambda_2U^{(q+\lambda_1)}U^{(q+\lambda_2)}v_1^p(x),\\
v_2^p(x)=\lambda_1 \lambda_2U^{(q+\lambda_2)}U^{(q+\lambda_1)}v_2^p(x).
\end{cases}
\end{align}
To describe the solution to \eqref{eq:ode11}, let us take an example of the regime-switching geometric Brownian motion $X_t$, satifsying  $\diff X_t=\mu_{\eta_t}X_t\diff t+\sigma_{\eta_t}X_t\diff B_t$, for our state variable (see \citet{guo2004}). The solution to \eqref{eq:ode11} becomes $v_1^p(x)=\sum_{i=1}^4 k_i x^{\beta_i}$ and $\beta$'s  are the roots of the equation
\begin{equation}\label{eq:root}
j_1(\beta)j_2(\beta)=\lambda_1\lambda_2
\end{equation}
where \begin{align*}
\begin{cases}
j_1(\beta)=q+\lambda_1-\left(\mu_1-\frac{1}{2}\sigma_1^2\right)\beta-\frac{1}{2}\sigma_1^2\beta^2,\\
j_2(\beta)=q+\lambda_2-\left(\mu_2-\frac{1}{2}\sigma_2^2\right)\beta-\frac{1}{2}\sigma_2^2\beta^2.
\end{cases}
\end{align*}
While we have four distinct roots $\beta_1<\beta_2<0<\beta_3<\beta_4$,
we have a restriction: since $v_2^p(x)=\lambda_2U^{(q+\lambda_2)}v^p_1(x)$,
\begin{align*}
v_2^p(x)=\lambda_2\E^x_2\left[\int_0^\infty e^{-(q+\lambda_2)t}\left(k_1 (X_t^{(2)})^{\beta_1}+k_2(X_t^{(2)})^{\beta_2}+k_3(X_t^{(2)})^{\beta_3}+k_4(X_t^{(2)})^{\beta_4}\right)\diff t\right].
\end{align*}
Using   $X_t^{\beta_i}=x^{\beta_i}\cdot\exp\left(\beta_i(\mu_2-\frac{1}{2}\sigma_2^2)t+\beta_i\sigma_2 B_t\right)$, we have \[\E_2^x\left[(X_t^{(2)})^{\beta_i}\right]=x^{\beta_i}\cdot\exp\left(\beta_i(\mu_2-\frac{1}{2}\sigma_2^2)t+\frac{1}{2}\beta_i^2\sigma_2^2t\right)\]
and
\begin{align*}
v^p_2(x)=\lambda_2\sum\limits_{i=1}^{4}\left(k_ix^{\beta_i}\int_0^\infty e^{-j_2(\beta_i)t}\diff t\right).
\end{align*}
For this function to be finite, we need $j_2(\beta_i)>0$ for each $i$. If this condition is violated, then we need to set $k_i=0$ for $i$ such that $j_2(\beta_i)<0$. Above, we first found $v^p_1(x)$ and then derived necessary conditions $j_2(\beta_i)>0 \; \; \forall i$. If we had first found $v^p_2(x)$, the necessary conditions would become $j_1(\beta_i)>0 \; \; \forall i$. This means that any solution $\beta$ must satisfy both \[j_1(\beta)>0 \conn j_2(\beta)>0.\] As it will be proved below, there are only two solutions to $j_1(\beta)j_2(\beta)=\lambda_1\lambda_2$ that satisfy this condition, and they have opposite signs.
\begin{proposition}\label{prop:one-beta}
	For the geometric Brownian motion, $j_1(\beta)j_2(\beta)=\lambda_1\lambda_2$ from \eqref{eq:root} has only two solutions ($\beta$), such that $j_1(\beta)>0$ and $j_2(\beta)>0$. The same is true for any regime-switching diffusion satisfing $\diff X_t=\mu_{\eta_t}\diff t+\sigma_{\eta_t}\diff B_t$, where $B_t$ is a standard Brownian motion and $\mu_i\in\R$, $\sigma_i>0$ are arbitrary constants for $i=1,2$. Moreover, these two solutions have opposite signs.
\end{proposition}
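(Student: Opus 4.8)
The plan is to reduce the statement to the one-variable geometry of the quartic $g(\beta):=j_1(\beta)j_2(\beta)$ on the window where the admissibility constraint $j_1(\beta)>0$ and $j_2(\beta)>0$ holds. I would first note that the arithmetic case $\diff X_t=\mu_{\eta_t}\diff t+\sigma_{\eta_t}\diff B_t$ is covered by the same computation: there $e^{\beta x}$ replaces $x^\beta$ and one obtains the identical $j_i(\beta)=q+\lambda_i-\mu_i\beta-\tfrac12\sigma_i^2\beta^2$, so the two cases can be handled together. Next I would record that each $j_i$ is a strictly concave quadratic in $\beta$ (leading coefficient $-\tfrac12\sigma_i^2<0$) with $j_i(0)=q+\lambda_i>0$; hence $j_i$ has two real roots $a_i<0<b_i$ and $\{\beta:j_i(\beta)>0\}=(a_i,b_i)$. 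The admissible set is then the open interval $(\underline{a},\bar{b})$ with $\underline{a}:=\max\{a_1,a_2\}<0<\bar{b}:=\min\{b_1,b_2\}$, which is nonempty since it contains $0$.

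The key step is an upper bound of two on the number of admissible solutions. On $(\underline{a},\bar{b})$ both factors of $g$ are strictly positive, so $\log g=\log j_1+\log j_2$ is a sum of logarithms of positive strictly concave functions and hence strictly concave; thus $g$ is strictly log-concave on $(\underline{a},\bar{b})$, which makes it unimodal --- strictly increasing then strictly decreasing --- and therefore it meets any horizontal level at most twice on that interval.

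To produce exactly two crossings of the level $\lambda_1\lambda_2$, I would use the boundary behaviour: at the endpoints one factor vanishes, so $g(\underline{a})=g(\bar{b})=0<\lambda_1\lambda_2$, while $g(0)=(q+\lambda_1)(q+\lambda_2)=\lambda_1\lambda_2+q(q+\lambda_1+\lambda_2)>\lambda_1\lambda_2$ since $q>0$. Hence $g-\lambda_1\lambda_2$ is negative at $\underline{a}$, positive at $0$, and negative at $\bar{b}$; the intermediate value theorem then yields a zero in $(\underline{a},0)$ and a zero in $(0,\bar{b})$, and the ``at most two'' bound forces these to be the only admissible zeros --- one strictly negative and one strictly positive, with $\beta=0$ excluded. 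For the GBM case these are the admissible exponents $\beta_i$ entering $v_1^p(x)=\sum_i k_i x^{\beta_i}$; for the arithmetic case they are the admissible frequencies.

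I expect the unimodality --- i.e.\ that the product of two downward parabolas cannot cross a fixed horizontal level more than twice on their common positivity window --- to be the one genuinely non-routine point, and the log-concavity remark is what I would use to dispatch it cleanly, avoiding any need to locate or compare the vertices of the two parabolas or to analyse $g'$ directly. The remaining steps are sign-chasing with the intermediate value theorem, together with the trivial checks that $(\underline{a},\bar{b})$ is a genuine nonempty open interval and that $g$ is not identically $\lambda_1\lambda_2$ on it.
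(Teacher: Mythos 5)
Your proof is correct, and the key step is handled by a genuinely different mechanism than the paper's. The paper bounds the number of admissible roots globally: it observes that $f(\beta)=j_1(\beta)j_2(\beta)-\lambda_1\lambda_2$ is a quartic with at most four real zeros, that $f$ is negative at all four zeros of the parabolas $j_1,j_2$ and positive at $0$ and at $\pm\infty$, and then counts sign changes (with a supporting figure) to conclude that two of the four zeros are forced to lie outside the common positivity window and exactly two inside it, of opposite signs. You instead never leave the window: since $j_1,j_2>0$ there, $\log(j_1 j_2)=\log j_1+\log j_2$ is a sum of strictly concave functions (concave increasing $\log$ composed with strictly concave $j_i$), so $j_1 j_2$ is strictly log-concave, hence takes the positive level $\lambda_1\lambda_2$ at most twice on $(\underline{a},\bar{b})$; the boundary values $g(\underline{a})=g(\bar{b})=0<\lambda_1\lambda_2<g(0)$ and the intermediate value theorem then supply exactly one negative and one positive crossing. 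Your route buys independence from the degree-four count and from any analysis at infinity --- it would apply verbatim to any pair of positive strictly concave factors vanishing at the window's endpoints --- and it replaces the paper's picture-assisted root bookkeeping with a one-line convexity argument; the paper's version is more elementary in that it uses nothing beyond the fundamental theorem of algebra and sign checks. Your observation that the arithmetic Brownian motion case yields the same quadratic structure (with $e^{\beta x}$ in place of $x^{\beta}$) matches the paper's closing remark that the drifted Brownian case is handled by adjusting $j_1,j_2$ accordingly.
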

\begin{proof}
	The proof uses the technique from Remark 2.1 in \citet{guo2001}. Let
	\[f(\beta)=j_1(\beta)j_2(\beta)-\lambda_1\lambda_2.\]	
	First, let us find a solution to $j_1(x)=q+\lambda_1-\left(\mu_1-\frac{1}{2}\sigma_1^2\right)x-\frac{1}{2}\sigma_1^2x^2=0$. This quadratic equation has two roots
	\[x_1^{+}=\frac{-(\mu_1-\frac{1}{2}\sigma_1^2)+\sqrt{(\mu_1-\frac{1}{2}\sigma_1^2)^2+2\sigma_1^2(\lambda_1+q)}}{\sigma_1^2}>0,\]
	\[x_1^{-}=\frac{-(\mu_1-\frac{1}{2}\sigma_1^2)-\sqrt{(\mu_1-\frac{1}{2}\sigma_1^2)^2+2\sigma_1^2(\lambda_1+q)}}{\sigma_1^2}<0.\]
	Moreover, $j_1(x)>0$ on $(x_1^-,x_1^+)$. Similarly, $j_2(x)=0$ has two solutions
	\[x_2^{+}=\frac{-(\mu_2-\frac{1}{2}\sigma_2^2)+\sqrt{(\mu_2-\frac{1}{2}\sigma_2^2)^2+2\sigma_2^2(\lambda_2+q)}}{\sigma_2^2}>0,\]
	\[x_2^{-}=\frac{-(\mu_2-\frac{1}{2}\sigma_2^2)-\sqrt{(\mu_2-\frac{1}{2}\sigma_2^2)^2+2\sigma_2^2(\lambda_2+q)}}{\sigma_2^2}<0,\]
	and $j_2(x)>0$ on $(x_2^-,x_2^+)$. Now, $f(x_1^-)=f(x_1^+)=f(x_2^-)=f(x_2^+)=-\lambda_1\lambda_2<0$. Also, we have $f(0)>0$, $f(\infty)>0$, and $f(-\infty)>0$. Figure \ref{fig:f} is an example graph of $f$. The blue dots on the horizontal axis indicate the values of $x_1^-,x_1^+,x_2^-,x_2^+$. Their magnitude relation does not affect the proof. The red area between the largest negative $x_{\cdot}^-$ and the smallest positive $x_{\cdot}^+$ is the area of $x$ where both $j_1(x)>0$ and $j_2(x)>0$ hold. Based on the abovementioned characteristics of $f$ and on the fact that $f$ will cross the horizontal axis only four times, we know that $f$ crosses the horizontal axis only twice on the red area, resulting in two solutions of opposite signs. The proof for the Brownian motion with drift is similar, with $j_1$ and $j_2$ adjusted accordingly using \eqref{eq:ode11}.
	\begin{figure}[h]
		\includegraphics[scale=0.8]{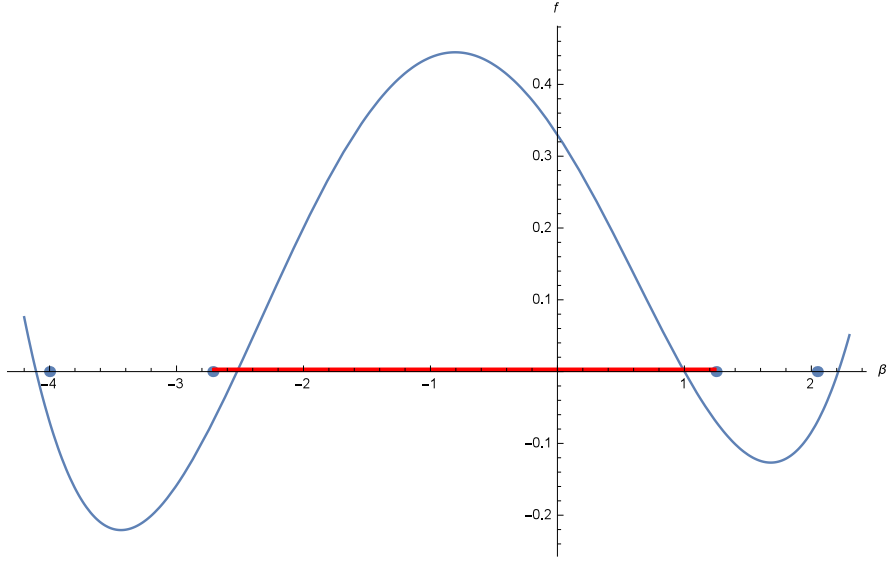}
		\caption{An example graph of $f$ in the proof of Proposition \ref{prop:one-beta}}\label{fig:f}
	\end{figure}
\end{proof}
\indent Since the two appropriate solutions from Proposition \ref{prop:one-beta} have opposite signs, it is intuitive that one of them may be eliminated, depending on the form of $A_4$, due to the boundedness condition of value functions on $A_4$ (e.g., see \citet{guo2004} where positive roots are eliminated). The boundedness condition is proved in Proposition \ref{prop:one_solution} below. Note that when handling the Ornstein-Uhlenbeck process or its exponential version, the fundamental solutions $\varphi_{q}$ and $\psi_{q}$ are special functions: the parabolic cylinder function and the confluent hypergeometric function of the second kind, respectively (see \citet[Sects. 10.2 and 9.9]{lebedev}) and therefore it is not easy to compute how many roots of $\beta$ satisfy $j_1(\beta)>0$ and $j_2(\beta)>0$ and how many of them should be eliminated. The next proposition overcomes this technical difficulty by first showing that the value functions must be bounded on $A_4$ for any underlying diffusion $X$. It then extends the result of Proposition \ref{prop:one-beta} ensuring that the solution to \eqref{eq:couple-recursive-A4}, in the case where one of the endpoints of $A_4$ is a natural boundary, is a  multiple of only one function; thus, only one weight $k$ needs to be determined.

\begin{proposition}\label{prop:one_solution} The value functions $v_1^p(x)$ and $v_2^p(x)$ in region $A_4=\mathrm{C}_1 \cap \mathrm{C}_2$ are  bounded on $A_4$.  Moreover, when one of the endpoints of the interval $A_4$ is the natural boundary,
	the solution to \eqref{eq:couple-recursive-A4} is $v_1^p(x)=kz(x)$ ($v_2^p$ will be found simultaneously) for some continuous function $z$ bounded in $A_4$. This holds for any underlying diffusion $X$ whose parameters $\mu(\cdot,i)$ and $\sigma(\cdot,i)$ satisfy \eqref{eq:Lipschitz-and-growth} for each $i=1,2$ and only one weight $k$ needs to be determined.
\end{proposition}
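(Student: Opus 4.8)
\noindent The plan is to prove the boundedness assertion first, purely from the stopping structure, and then to combine it with the linear ODE satisfied by $v_1^p$ on $A_4$ to force the admissible solution space near a natural endpoint down to a line.

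\medskip
\noindent\textbf{Step 1 (boundedness).} It suffices to bound $v^*(\cdot,i)$ on each connected component of $A_4$. On a component with both endpoints interior this is immediate, since its closure is a compact subset of $\mathcal I$ on which $v^*(\cdot,i)$ is continuous (Lemma \ref{lem:hitting}); so take a component having a natural endpoint and, after relabelling, assume it is $(a,r)$ with $r$ natural (the case $(\ell,b)$ with $\ell$ natural is symmetric; if both endpoints are natural then $A_4=\mathcal I$, $\Gamma_1=\Gamma_2=\varnothing$, and $v^*\equiv0$ there). Since $(a,r)\subseteq A_4=\mathrm C_1\cap\mathrm C_2$ contains no stopping point for either regime, a process started at $x\in(a,r)$ is not stopped before it leaves $(a,r)$ (Lemma \ref{lem:hitting}), which — $r$ being natural, hence unattainable — can occur only through $a$, at time $H_a$. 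The strong Markov property then gives
\[
v^*(x,i)=\E^x_i\!\left[e^{-qH_a}\,\1_{\{H_a<\infty\}}\,v^*(a,\eta_{H_a})\right],\qquad x\in(a,r),
\]
the contribution on $\{H_a=\infty\}$ being $0$ (there $\tau^*=\infty$). As $a\in\mathcal I$ is interior, $v^*(a,1)$ and $v^*(a,2)$ are finite (Lemma \ref{lem:hitting}), whence $v_i^p(x)=v^*(x,i)\le\max\{v^*(a,1),v^*(a,2)\}$ for all $x\in(a,r)$; thus $v_1^p$ and $v_2^p$ are bounded on $A_4$.

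\medskip
\noindent\textbf{Step 2 (reduction to the fourth-order ODE).} On $A_4$, $v_1^p$ solves the fourth-order linear equation \eqref{eq:ode11}, whose solution space on the interval $(a,r)$ is four-dimensional, and $v_2^p$ is recovered from $(q+\lambda_1-\G_1)v_1^p=\lambda_1 v_2^p$ — so $v_2^p$ is built from the same fundamental solutions (in the constant-coefficient case because the characteristic values are nonzero; in general because \eqref{eq:ode22} has the same characteristic data as \eqref{eq:ode11}). Two requirements then isolate $v_1^p$ inside this four-dimensional space: by Step 1 it must remain bounded as $x\uparrow r$; and the resolvent identities \eqref{eq:coupled-vp} must hold with \emph{finite} resolvents, which — exactly as in the computation preceding Proposition \ref{prop:one-beta} — forces every fundamental solution appearing in $v_1^p$ to be \emph{admissible}, i.e.\ to be reproduced (not dominated) by both $U^{(q+\lambda_1)}$ and $U^{(q+\lambda_2)}$ near $r$; for geometric Brownian motion this is precisely $j_1(\beta)>0$ and $j_2(\beta)>0$.

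\medskip
\noindent\textbf{Step 3 (collapse to a line; the main obstacle).} For geometric Brownian motion — and for Brownian motion with drift — Proposition \ref{prop:one-beta} gives exactly two admissible characteristic roots, of opposite sign, of which exactly one produces a solution bounded at the natural endpoint $r$; the two requirements of Step 2 thus leave a one-dimensional space, $v_1^p=kz$ with $z$ that unique admissible bounded solution — continuous, as the solution of a linear ODE with continuous coefficients — and $v_2^p=k\lambda_2U^{(q+\lambda_2)}z$ is determined simultaneously, the single weight $k$ (together with the free endpoint $a$) being fixed by the continuous/smooth fit of $v^*(\cdot,i)$ at $a$. The real difficulty is transporting this count to an arbitrary regular diffusion, for which $\psi_{q+\lambda_i}$ and $\varphi_{q+\lambda_i}$ are not explicit. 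I would argue: at the natural endpoint $\psi_{q+\lambda_i}(r-)=+\infty$ while $\varphi_{q+\lambda_i}$, being decreasing, stays bounded; a Sturm-type comparison of \eqref{eq:ode11} with the decoupled equations $(q+\lambda_i-\G_i)u=0$ then shows that exactly two of the four fundamental solutions of \eqref{eq:ode11} are admissible and that these two have opposite boundary behaviour, one bounded at $r$ and one bounded at $\ell$. Here the boundedness of Step 1 is decisive: it supplies a bounded admissible solution a priori, ruling out the possibility that both admissible solutions are bounded at $r$, so exactly one survives both requirements and $v_1^p=kz$ as asserted. This Sturm-type count for a general diffusion is the only substantial point; Steps 1 and 2 and the fit at $a$ are routine.
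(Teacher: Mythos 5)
Your Step 1 is a correct and self-contained argument for boundedness, and it is genuinely different from the paper's: the paper argues by contradiction that if $\sup_{A_4}v_i^p$ exceeded $h(x_0)$ for every $x_0\in\Gamma_i$ then the stopping region $\Gamma_i$ could not exist, hence $\sup_{A_4}v_i^p\le h(x_0)<\infty$ for some $x_0\in\Gamma_i$; your route via the strong Markov property at $H_a$ reaches the same conclusion and is arguably more transparent. (Minor caveats: the representation $v^*(x,i)=\E^x_i[e^{-qH_a}\1_{\{H_a<\infty\}}v^*(a,\eta_{H_a})]$ deserves a word of justification from the DPP of Lemma \ref{lem:hitting}, and the ``both endpoints natural'' case does not obviously force $v^*\equiv0$ --- but these are repairable.)

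The second half of your proposal, however, has a genuine gap, and it sits exactly where the proposition earns its keep. Your Steps 2--3 rest on a four-dimensional solution space for \eqref{eq:ode11}, an ``admissibility'' count transported from Proposition \ref{prop:one-beta}, and a ``Sturm-type comparison'' showing that exactly two fundamental solutions are admissible with opposite boundary behaviour. None of this is proved: for GBM and Brownian motion it merely restates Proposition \ref{prop:one-beta}, and for a general diffusion you explicitly defer the count to an argument you ``would'' make. But the paper introduces Proposition \ref{prop:one_solution} precisely because that count is \emph{not} tractable for general diffusions (the text immediately before it cites the Ornstein--Uhlenbeck case, where $\psi$ and $\varphi$ are parabolic cylinder and confluent hypergeometric functions); so the one step you label ``the only substantial point'' is the entire content of the claim. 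There is also a structural difficulty you pass over: the resolvents in \eqref{eq:coupled-vp} integrate $v_1^p$ over the whole trajectory of $X$, which leaves $A_4$, so ``admissibility'' of a local fundamental solution of the ODE on $A_4$ is not even well defined without the global extensions $M_i$ used in Proposition \ref{prop:regions}. The paper avoids all of this: it observes that $\lambda_1\lambda_2U^{(q+\lambda_1)}U^{(q+\lambda_2)}f(x)=\E_1^x[e^{-qT^1}\E_2^{X_{T^1}}[e^{-qT^2}f(X_{T^2})]]$ and that this operator is a contraction on the complete metric space $C^*(\ell,b]$ of bounded continuous functions (with constant $\tfrac{\lambda_1}{q+\lambda_1}\tfrac{\lambda_2}{q+\lambda_2}<1$), so the Banach fixed point theorem pins down the solution of \eqref{eq:couple-recursive-A4} with no spectral or ODE analysis whatsoever --- the boundedness from the first part serving only to place $v_1^p$ in that space. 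If you want to salvage your route, you must actually supply the comparison/oscillation argument for the fourth-order operator $(q+\lambda_2-\G_2)(q+\lambda_1-\G_1)-\lambda_1\lambda_2$ at a natural boundary for an arbitrary regular diffusion; as written, the claim is asserted rather than proved.
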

\begin{proof} We first show that $\sup_{x\in A_4}v_i^p(x)<\infty$, $i=1,2$ by using a contradiction. Let us assume that for stopping region $\Gamma_i$, $h(x_0)<\sup_{x\in A_4}v_i^p(x)$ for $\forall x_0\in\Gamma_i$. This implies the nonexistence of the stopping region $\Gamma_i$ and we exclude such case. Hence if there is a stopping region $\Gamma_i$,  there exists $x_0\in \Gamma_i$ such that  $\sup_{x\in A_4}v_i^p(x)\le h(x_0)<\infty$.
	
	Let us assume that one of the endpoints of $A_4$ is the left boundary $\ell$. The proof is similar when one of the endpoints of $A_4$ is $r$. It is natural to define $C^*(\ell,b]$, a set of \emph{bounded} continuous functions on $(\ell,b]$ for some $b<r$. We know that $C^*(\ell,b]$ is a complete metric space with respect to the sup norm
	\[||f-g||:=\sup\limits_{x\in (\ell,b]}|f(x)-g(x)|\]
	(see \citet[pp.~150-151]{rudin_math_analysis}).
	We have
	\[\lambda_1\lambda_2U^{(q+\lambda_1)}U^{(q+\lambda_2)}f(x)=\E_1^x\left[e^{-qT^1}\E_2^{X_{T^1}}\left[e^{-qT^2}f(X_{T^2})\right]\right] \quad f\in C^*(\ell,b]\]
	Furthermore, $\E_1^{\cdot}\left[e^{-\alpha T^1}\right]=\int_0^{\infty}e^{-\alpha t}\lambda_1e^{-\lambda_1t}\diff t=\frac{\lambda_1}{\alpha+\lambda_1}$ for $\alpha>0$. Then, $\lim\limits_{\alpha\downarrow 0}\E_1^{\cdot}\left[e^{-\alpha T^1}\right]=1$ and $T^1<\infty$ almost surely. Similarly, $T^2<\infty$ holds almost surely. By taking sufficiently large $b$, we can ensure that
	$\E_2^{X_{T^1}}\left[e^{-qT^2}f(X_{T^2})\right]\le ||f||\cdot \E_2^{X_{T^1}}\left[e^{-qT^2}\right]$ almost surely and  $A_4\subset(\ell,b]$.
	Then, for $f,g\in C^*(\ell,b]$, we have
	\begin{align*}
	\left|\left|\lambda_1\lambda_2U^{(q+\lambda_1)}U^{(q+\lambda_2)}f-\lambda_1\lambda_2U^{(q+\lambda_1)}U^{(q+\lambda_2)}g\right|\right|&\le\E_1^x\left[e^{-qT^1}\left|\left|f-g\right|\right|\E_2^{X_{T^1}}\left[e^{-qT^2}\right]\right]=\left|\left|f-g\right|\right|\E_1^x\left[e^{-qT^1}\frac{\lambda_2}{q+\lambda_2}\right]\\
	&=\left|\left|f-g\right|\right|\frac{\lambda_1}{q+\lambda_1}\frac{\lambda_2}{q+\lambda_2} \qquad x\in(\ell,b]
	\end{align*}
	Since $\frac{\lambda_1}{q+\lambda_1}\frac{\lambda_2}{q+\lambda_2}<1$, it follows that $\lambda_1\lambda_2U^{(q+\lambda_1)}U^{(q+\lambda_2)}$is a contraction map on $C^*(\ell, b]$. Then, due to the contraction mapping theorem (see \citet[Section 10.3]{royden_real_analysis}), there is only one fixed point $f\in C^*(\ell,b]$ satisfying $\lambda_1\lambda_2U^{(q+\lambda_1)}U^{(q+\lambda_2)}f=f$ . In particular, we have only one solution to the first equation in \eqref{eq:couple-recursive-A4}.
\end{proof} 
\bigskip

\section{Method to Find Optimal Solution}\label{sec:method}
In Section \ref{sec:formulation}, we characterized the value functions of the problem \eqref{eq:problem} in each region of $A_j, j=1, 2, 3, 4$ being defined as
\[\II=(\Gamma_1 \cap \Gamma_2) \cup (\mathrm{C}_1 \cap \Gamma_2)\cup (\mathrm{C}_2 \cap \Gamma_1)\cup (\mathrm{C}_1\cap \mathrm{C}_2):=A_1 \cup A_2 \cup A_3\cup A_4.
\]
Based on this characterization, we will show how to obtain the value functions. Our characterization of the value functions in Section 3 (Proposition \ref{prop:regions}) does not presume any specific order or form of $A_j$, $j=1,2,3,4$. Therefore, in case of a complex structure, for example, disconnected stopping regions, every pattern of partitioning the state space $\II$ can still be written as combinations of $A_j$. 
Thus, the necessary and sufficient conditions discussed below, as well as the procedure of estimation, are applicable to all the patterns.

\emph{Since we know, from Proposition \ref{prop:regions}, the form of $v^*(x, i), i=1, 2$ in each region, we can apply the result of $F$-concavity characterization of  \citet[Propositions 5.11 and 5.12]{DK2003} with the aid of Proposition \ref{prop:pham} in each region}. \\

\noindent \underline{Transformation :}
Let us define as in \eqref{eq:F}
\[
F_1(x):=\frac{\psi_{q+\lambda_1}(x)}{\varphi_{q+\lambda_1}(x)} \conn F_2(x):=\frac{\psi_{q+\lambda_2}(x)}{\varphi_{q+\lambda_2}(x)}
\]
By the $F$-concavity characterization of the value function which we briefly reviewed in Section \ref{sec:math-prep}, for each $i=1,2$, we need to find the smallest nonnegative concave majorant, denoted by $W_i(F_i(x))$, of the appropriate reward function in the transformed space by $F_i$.  Recall \eqref{eq:transform} for this transformation. At the same time, we need to make sure that the necessary conditions in Propositions 4.1 below are satisfied. \\
\indent Let us stress first that this method differs from so-called ``guess and verify" method which requires some difficult algebraic computations that are usually problem-specific. In contrast, our method is to find solutions systematically using the geometric argument of concavity.\\
\indent We analyze the threshold points between regions $A_j$, summarized in Table \ref{tbl:thresholds}.
The table excludes the thresholds between $A_2/A_3$ as well as $A_1/A_4$ regions because these thresholds do not exist in our setting as we show below. 
\begin{table}[h]
		\caption{Summary of the threshold points and the two regions that each point separates. The threshold for $A_2/A_3$ and $A_1/A_4$ do not exist in our setting.} 
		\begin{tabular}{cccccc}
			Threshold & Separated Regions & & & Threshold & Separated Regions\\
			\cline{1-2} \cline{5-6}
			$a$ & $A_1$/$A_2$  & & & $b$ & $A_1$/$A_3$\\
			$c$ & $A_2$/$A_4$ & & & $d$ & $A_3$/$A_4$
		\end{tabular}\label{tbl:thresholds}
\end{table}
\indent First, suppose that $A_2$ and $A_3$ are adjacent regions. Since for $X^{(1)}$, $A_2$ is a continuation region and $A_3$ is a stopping region, there exist $y\in A_2$ and $z\in A_3$ such that $h(y)<h(z)$. This in turn contradicts the fact that $A_2$ is a stopping region and $A_3$ is a continuation region for $X^{(2)}$ since the reward $h$ is the same. Next, assume that $A_4$ is adjacent to $A_1$ region and set the threshold point as $s$. Without loss of generality, we have $v_1^p(x)-h(x)>v_2^p(x)-h(x)>0$ for some $x\in A_4$ in the neighborhood of $s$ (with $x<s$). This indicates that the value obtained by waiting is higher for $X^{(1)}$. In order to have $v_1^p(s)-h(s)=v_2^p(s)-h(s)=0$, the relationship must reverse and $v_1^p(y)-h(y)<v_2^p(y)-h(y)$ must hold for some $x<y\le s$. This cannot happen since all other settings are unchanged. We summarize these findings as a Remark for future reference.\\
\begin{remark}\label{rmrk:adjacent}
	\begin{normalfont}
    The following two statements hold true:
    \begin{enumerate}
    	\item The region $A_1$ cannot be adjacent to the region $A_4$. 
    	\item The region $A_2$ cannot be adjacent to the region $A_3$.
    \end{enumerate}
	\end{normalfont}
\end{remark}
\indent By symmetry, the thresholds $a$ and $b$ as well as $c$ and $d$ have the same characterization; therefore without loss of generality, we will discuss only $a$ and $c$. Due to the interacting nature of \eqref{eq:sup-couple}, we need to find $W_1(F_1(x))$ and $W_2(F_2(x))$, together with $a$ and $c$, simultaneously in their respective transformed spaces. The regions $A_j$ may be disconnected, resulting in multiple $a$ and $c$ thresholds;  however, the necessary conditions of Proposition \ref{prop:necessary_cond} are the same for any such $a$ and $c$ regardless of their number. In order to analyze the thresholds, we construct the following Tables \ref{tbl:rewards1} and \ref{tbl:rewards2}:
\begin{table}[h]
\caption{The reward function in \eqref{eq:sup-couple1} and the value function  $v^*(x, 1)$. Note that $v^p_1(\cdot)$ in the last row is the solution to \eqref{eq:ode11}}
		\begin{tabular}{|c|c|c|c|}
			\hline
			Region & $A_1$ & $A_2$ & $A_4$ \\
			\hline
			Reward function & $\underset{(H_{11})}{h-\lambda_1U^{(q+\lambda_1)}h}$  &  $\underset{(H_{11})}{h-\lambda_1U^{(q+\lambda_1)}h}$  &  $\underset{(H_{12})}{h-v^p_1}$\\
			\hline
			Value function & $h$ & $u_1+\lambda_1U^{(q+\lambda_1)}h$ & $v^p_1$ \\
			\hline
		\end{tabular}\label{tbl:rewards1}
\end{table}
\newline Note that  $H_{1 j}, j=1, 2$ in the parentheses (Table \ref{tbl:rewards1}) denote the names of the reward functions transformed by $F_1(x)$.
\begin{align}\label{eq:H1-all}
\begin{cases}
H_{11}(y)=\frac{h-\lambda_1 U^{(q+\lambda_1)}h}{\varphi_{q+\lambda_1}} \circ F_1^{-1}(y), & y\in F_1(A_1)\cup F_1(A_2), \\
H_{12}(y)=\frac{h-\lambda_1 U^{(q+\lambda_1)}v^p_2}{\varphi_{q+\lambda_1}} \circ F_1^{-1}(y), &y\in F_1(A_4).
\end{cases}
\end{align}		
	
\begin{table}[h]
\caption{The reward function in \eqref{eq:sup-couple2} and the value function  $v^*(x, 2)$}
		\small{\begin{tabular}{|c|c|c|c|}
				\hline
				Region & $A_1$ & $A_2$ & $A_4$ \\
				\hline
				Reward function & $\underset{(H_{21})}{h-\lambda_2U^{(q+\lambda_2)}h}$ & $\underset{(H_{22})}{h-\lambda_2U^{(q+\lambda_2)}[u_1+\lambda_1U^{(q+\lambda_1)}h]}$ & $\underset{(H_{23})}{h-v^p_2}$ \\
				\hline
				Value function & $h$ & $h$  & $v^p_2=\lambda_2U^{(q+\lambda_2)}v^p_1$ \\
				\hline
		\end{tabular}}\label{tbl:rewards2}
	\end{table}
	Note also  that  $H_{2 j}, j=1, 2, 3$ in the parentheses (Table \ref{tbl:rewards2}) denote the names of the reward functions transformed by $F_2(x)$.
	\begin{align}\label{eq:H2-all}
	\begin{cases}
	H_{21}(y)=\frac{h-\lambda_2 U^{(q+\lambda_2)}h}{\varphi_{q+\lambda_2}} \circ F_2^{-1}(y), \quad y\in F_2(A_1),\\
	H_{22}(y)=\frac{h-\lambda_2U^{(q+\lambda_2)}[u_1+\lambda_1U^{(q+\lambda_1)}h] }{\varphi_{q+\lambda_2}} \circ F_2^{-1}(y), \quad y\in F_2(A_2),\\
	H_{23}(y)=  \frac{h-v^p_2}{\varphi_{q+\lambda_2}} \circ F_2^{-1}(y), \quad y\in F_2(A_4).
	\end{cases}
	\end{align}
The following proposition is based on Proposition \ref{prop:regions}.
\begin{proposition}\label{prop:necessary_cond}
		Necessary Conditions.\\
\noindent \begin{normalfont}(N-1)	\end{normalfont} The continuity of the value functions (Lemma \ref{lem:hitting}) provides two conditions at $c$: 
\begin{align}\label{eq:condition1}
u_1(c)+\lambda_1U^{(q+\lambda_1)}h(c)=v^p_1(c),
\end{align}
\begin{align}\label{eq:condition2}
h(c)=v^p_2(c)=\lambda_2U^{(q+\lambda_2)}v^p_1(c).
\end{align}
If $h$ is differentiable everywhere on $\{x\in\mathcal{I}: h(x)>0\}$, the following condition also needs to be satisfied for optimality:
\begin{equation}\label{eq:condition3}
(h(c)-v^p_2(c))'=h'(c)-\left(\lambda_2U^{(q+\lambda_2)}v^p_1(x)\right)'\bw_{x=c}=0.
\end{equation}
Conditions \eqref{eq:condition2} and \eqref{eq:condition3} are equivalent to
\begin{align}\label{eq:1-2}
\begin{cases}
H_{23}(F_2(c))=0,\\
H'_{23}(F_2(c))=0.
\end{cases}
\end{align}
Whether $h$ is differentiable or not, we must choose $c$ such that the maximum of $H_{23}(y)$ on $F_2(A_4)$ is attained at $y=F_2(c)$ and $H_{23}(F_2(c))=0$.\\
\begin{normalfont}(N-2)	\end{normalfont} 
In $A_2$ region, $u_1(x)=\varphi_{q+\lambda_1}(x)W(F_1(x))$ where $W(F_1(x))=A F_1(x)+B$ is the smallest nonnegative concave majorant of $H_{11}$ in $F_1(A_1)\cup F_1(A_2)$ with some $A, B\in \R$ to be determined. The threshold $a$ is identified as a point where $W(F_1(a))=H_{11}(F_1(a))$. Due to the continuity of the value function, the following condition holds at $a$:
\begin{equation}\label{eq:condition-a}
u_1(a)=h(a)-\lambda_1U^{(q+\lambda_1)}h(a).
\end{equation}
If $h$ is differentiable everywhere on $\{x\in\mathcal{I}: h(x)>0\}$, the following must also be satisfied for the optimality: 
\begin{equation}\label{eq:condition-a2}
u_1'(a)-\left(h'(a)-\left(\lambda_1U^{(q+\lambda_1)}h(x)\right)'\bw_{x=a}\right)=0.
\end{equation}
The conditions \eqref{eq:condition3} and \eqref{eq:condition-a2} coincide with the celebrated smooth-fit condition. In both differentiable and nondifferentiable cases, $c$ and $a$ are obtained by finding the smallest nonnegative concave majorants of the transformed rewards $H_{23}$ and $H_{11}$, respectively. 
\end{proposition}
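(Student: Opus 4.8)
The proposition assembles the \emph{necessary} conditions that the optimal thresholds $a,c$ and the auxiliary function $u_1$ must satisfy; each of them follows by combining three facts already in hand: the region-wise explicit forms of $v^*(\cdot,1),v^*(\cdot,2)$ in Proposition~\ref{prop:regions}; the global continuity (and, when $h$ is differentiable on $\{x\in\II:h(x)>0\}$, the $C^1$-regularity) of $v^*(\cdot,i)$; and the $F$-concavity representation of the one-dimensional problems \eqref{eq:sup-couple1}--\eqref{eq:sup-couple2} from \citet{DK2003} via Proposition~\ref{prop:pham}. I would derive, in turn, the value-matching equations, then the concave-majorant representation of the value functions (hence the affine form of $u_1$ and the maximum characterisation of $c$), and finally the smooth-fit relations and their transformed-space versions.

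\textbf{Value matching.} By Lemma~\ref{lem:hitting}, $v^*(\cdot,i)$ is continuous on $\II$. At a threshold $c$ between $A_2$ and $A_4$, Proposition~\ref{prop:regions} gives $v^*(\cdot,1)=u_1+\lambda_1U^{(q+\lambda_1)}h$ on the $A_2$ side and $v^*(\cdot,1)=v_1^p$ on the $A_4$ side, so equating one-sided limits yields \eqref{eq:condition1}; likewise $v^*(\cdot,2)=h$ on $A_2$ and $v^*(\cdot,2)=v_2^p=\lambda_2U^{(q+\lambda_2)}v_1^p$ on $A_4$ yields \eqref{eq:condition2}, and since $\varphi_{q+\lambda_2}(c)>0$ and $F_2$ is strictly increasing this is exactly $H_{23}(F_2(c))=0$, the first line of \eqref{eq:1-2}. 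At a threshold $a$ between $A_1$ and $A_2$, equating the $A_1$ form $h$ with the $A_2$ form $u_1+\lambda_1U^{(q+\lambda_1)}h$ gives \eqref{eq:condition-a}.

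\textbf{Concave-majorant representation.} Applying \citet[Props.~5.11--5.12]{DK2003} to \eqref{eq:sup-couple1} and \eqref{eq:sup-couple2} --- legitimate since their rewards are continuous and of linear growth on each $A_j$ (proof of Proposition~\ref{prop:regions}) and the suprema satisfy Proposition~\ref{prop:pham} --- each supremum term equals $\varphi_{q+\lambda_i}(x)W_i(F_i(x))$ with $W_i$ the smallest nonnegative concave majorant of the transformed reward ($H_{11},H_{12}$ for $i=1$; $H_{21},H_{22},H_{23}$ for $i=2$), the stopping region being the set on which $W_i\circ F_i$ meets the transformed reward. Since $A_2=\mathrm{C}_1\cap\Gamma_2$ sits inside the continuation region of \eqref{eq:sup-couple1}, the majorant is affine there, $W(y)=Ay+B$, so $u_1=\varphi_{q+\lambda_1}(AF_1+B)=A\psi_{q+\lambda_1}+B\varphi_{q+\lambda_1}$ solves $(q+\lambda_1-\G_1)u_1=0$ --- the representation of $u_1$ in (N-2) --- and $a$ is the contact point $W(F_1(a))=H_{11}(F_1(a))$. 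For $c$: as $A_4\subset\mathrm{C}_2$, $v^*(\cdot,2)>h$ on $A_4$, i.e.\ $H_{23}<0$ on $F_2(A_4)$ while $H_{23}(F_2(c))=0$, so $F_2(c)$ is a maximiser of $H_{23}$ over $F_2(A_4)$ with maximum value $0$ --- the last assertion of (N-1). That $a$ and $c$ are always recovered from these smallest nonnegative concave majorants is then just the \citet{DK2003} description of the stopping region as the contact set.

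\textbf{Smooth fit --- the main obstacle.} Assume $h$ differentiable on $\{x\in\II:h(x)>0\}$. First one checks that the relevant transformed rewards are $C^1$ at the contact points $F_1(a),F_2(c)$: the resolvent terms are $C^1$ (as in the proof of Proposition~\ref{prop:regions}), and $v^*(\cdot,1)$, lying in the open continuation region $\mathrm{C}_1$ of \eqref{eq:sup-couple1} near $c$, is $C^2$ there, so the reward $h-\lambda_2U^{(q+\lambda_2)}v^*(\cdot,1)$ of \eqref{eq:sup-couple2} is $C^1$ at $c$. Then the key geometric fact is the tangency of a smallest nonnegative concave majorant at a contact point bordering an interval on which the majorant is a strictly dominating affine bridge: majorization on the bridge side forces its slope to be $\le$ the reward's derivative there, while concavity forces the slope to be $\ge$ the reward's derivative on the contact side, and $C^1$-ness of the reward collapses these to equality, giving $W'(F_1(a))=H'_{11}(F_1(a))$ and $W_2'(F_2(c))=H'_{23}(F_2(c))$. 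Transporting this through $v^*(x,i)=\varphi_{q+\lambda_i}(x)W_i(F_i(x))+(\text{resolvent term})$ and the chain rule, and using $H_{23}(F_2(c))=0$, turns $W_2'(F_2(c))=H'_{23}(F_2(c))$ into $(h-v_2^p)'(c)=0$, i.e.\ \eqref{eq:condition3}, equivalently the second line of \eqref{eq:1-2}; the same computation at $a$ gives \eqref{eq:condition-a2}. Since $v^*(\cdot,i)=h$ on the stopping side of each threshold, \eqref{eq:condition3} and \eqref{eq:condition-a2} are exactly the classical smooth-fit conditions. The two delicate points are the rigorous justification of the tangency at a \emph{boundary} of the contact set (rather than at an interior contact point, where the reward could even change branch) and the bookkeeping of the chain rule through the two transformations $F_1,F_2$ and the resolvent corrections so that the $C^1$-regularity of $v^*(\cdot,i)$ translates into precisely the stated conditions on $h$ and on $\lambda_iU^{(q+\lambda_i)}v^p_j$.
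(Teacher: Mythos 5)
Your proposal is correct, and for the value-matching identities \eqref{eq:condition1}, \eqref{eq:condition2}, \eqref{eq:condition-a}, the affine representation of $u_1$, and the characterization of $F_2(c)$ as a zero-valued maximizer of $H_{23}$ on $F_2(A_4)$, it follows essentially the paper's own argument: continuity from Lemma \ref{lem:hitting}, the region-wise forms from Proposition \ref{prop:regions}, and the $F$-concavity machinery of \citet{DK2003}. Where you genuinely diverge is the smooth-fit conditions \eqref{eq:condition3} and \eqref{eq:condition-a2}. The paper obtains them by an economic first-order-condition argument: it sets $M(x)=h(x)-\lambda_2U^{(q+\lambda_2)}v_1^p(x)$, reads the two terms as the immediate reward and the opportunity cost of stopping, asserts $M'(c)=0$ as the optimality condition, and then uses the chain-rule identity for $\diff H_{2\cdot}/\diff y$ only to show that \eqref{eq:condition2}--\eqref{eq:condition3} are equivalent to \eqref{eq:1-2}. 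You instead derive the smooth fit from the geometry of the smallest nonnegative concave majorant at the boundary of the contact set (one-sided slope inequalities from majorization on the continuation side and concavity on the stopping side, collapsed to equality by $C^1$-regularity of the transformed reward), and then transport the tangency back through $F_2$ and the resolvent correction. Your route stays entirely inside the Dayanik--Karatzas framework and avoids appealing to a first-order condition whose validity is essentially what is being established, so it is arguably the more self-contained derivation; the price is the delicate point you yourself flag --- the reward branch changes from $H_{23}$ to $H_{22}$ across $F_2(c)$, so your two one-sided inequalities live on different branches and the tangency at the boundary of the contact set needs a separate justification --- which your sketch leaves open, just as the paper's heuristic does. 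The paper's version buys brevity and an interpretation; yours buys a derivation consistent with how $a$ and $c$ are actually computed in Section \ref{sec:method}.
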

\begin{proof}
Equations \eqref{eq:condition1}, \eqref{eq:condition2} and \eqref{eq:condition-a} are immediate consequences of Lemma \ref{lem:hitting} and Proposition \ref{prop:regions}. For \eqref{eq:condition3}, let us consider the function $M(x):=h(x)-\lambda_2U^{(q+\lambda_2)}v^p_1(x)$ (see \eqref{eq:g21}). The first term is the reward that one obtains when stopping immediately at $x$.  The second term is the discounted value that one can expect when not stopping at $x$.  It can be regarded as the opportunity cost of stopping  and hence the first order condition for the optimality is $M'(x)=0$. For the reward function $H_{2\cdot}(y)$ in the transformed space, we have
\[{\frac{\diff H_{2\cdot}(y)}{\diff y}=\left[\frac{1}{F_2'}\left(\frac{(h-v^p_2)'\varphi_{q+\lambda_2}-(h-v^p_2)\varphi'_{q+\lambda_2}}{\varphi^2_{q+\lambda_2}}\right)\right]\circ F_2^{-1}(y).}\]
Setting $y=F_2(c)$, the conditions \eqref{eq:condition2} and \eqref{eq:condition3} imply \eqref{eq:1-2}. \\
\indent The $u_1$ function solves \eqref{eq:qvi-1} and in view of \eqref{eq:v=phiW},
\begin{equation}\label{eq:u1}
u_1(x)=\varphi_{q+\lambda_1}(x)W(F_1(x))=
A\psi_{q+\lambda_1}(x)+B\varphi_{q+\lambda_1}(x)
\end{equation} with some $A, B\in \R$ to be determined. Hence, $W_1(F_1(x))$ is in the form of
\[
W_1(F_1(x))=A F_1(x)+B
\] on $F_1(A_2)$ and majorizes the reward function $H_{11}$. 
Since point $a$ is the threshold between $A_1$ and $A_2$ regions and is in the stopping region of $v^*(x,2)$, the reward function for $v^*(x,1)$ is $h(x)-\lambda_1U^{(q+\lambda_1)}h(x)$ in both $A_1$ and $A_2$ regions. Hence we can identify the boundary point $a$ by finding the smallest nonnegative concave majorant of $h(x)-\lambda_1U^{(q+\lambda_1)}h(x)$ transformed by $F_1(x)$ in $F_1(A_2)\cup F_1(A_1)$. The condition \eqref{eq:condition-a2} is derived similarly to \eqref{eq:condition3}.
\end{proof}

The following proposition derives the geometric conditions that are sufficient for optimality. This is just to confirm that the value functions are smallest nonnegative concave majorants of respective rewards in each region under necessary conditions.

\begin{proposition}\label{prop:sufficient-cond}
Sufficient Conditions. \\
Given that the necessary conditions in Proposition \ref{prop:necessary_cond} are satisfied, we have the following four sufficient conditions for the optimality of the value functions:
\begin{enumerate}
\item [(S-1)] $h(x)-\lambda_1U^{(q+\lambda_1)}v_2^p(x)=h(x)-v^p_1(x)$ transformed by $F_1(x)$ is majoraized by the horizontal axis in $F_1(A_4)$.
\item [(S-2)]$h(x)-\lambda_2U^{(q+\lambda_2)}v_1^p(x)$ transformed by $F_2(x)$ is majoraized by the horizontal axis in $F_2(A_4)$.
\item [(S-3)] $u_1(x)$ transformed by $F_1(x)$ is the smallest nonnegative concave majorant of $h(x)-\lambda_1U^{(q+\lambda_1)}h(x)$ transformed by $F_1(x)$ in $F_1(A_2)$ and $h(x)-\lambda_1U^{(q+\lambda_1)}h(x)$ transformed by $F_1(x)$ is concave in $F_1(A_1)$ .
\item [(S-4)] $h(x)-\lambda_2U^{(q+\lambda_2)}h(x)$ transformed by $F_2(x)$ is concave in $F_2(A_1)$ and $h(x)-\lambda_2U^{(q+\lambda_2)}[u_1+\lambda_1U^{(q+\lambda_1)}h](x)$ transformed by $F_2(x)$ is concave in $F_2(A_2)$.
\end{enumerate}
\end{proposition}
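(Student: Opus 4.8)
The plan is to verify that, once the necessary conditions of Proposition \ref{prop:necessary_cond} hold, the four geometric conditions (S-1)--(S-4) are exactly what is needed to certify that each $v^*(\cdot,i)$ coincides, in the transformed coordinates $F_i$, with the smallest nonnegative concave majorant of the appropriate piecewise-defined reward. The backbone of the argument is the $F$-concavity characterization (\citet[Propositions 5.11 and 5.12]{DK2003}) combined with Proposition \ref{prop:pham}, applied separately on each region $A_1,A_2,A_4$ (and, by the symmetry noted in Section \ref{sec:method}, $A_3$ is handled like $A_2$). Concretely, I would fix regime $i=1$ first and argue that the function $W_1$ defined piecewise by: the transform of $h$ on $F_1(A_1)$; the affine map $AF_1(x)+B$ from \eqref{eq:u1} on $F_1(A_2)$; and the transform of $v^p_1$ on $F_1(A_4)$, is (a) nonnegative, (b) concave across all of $F_1(\II)$, and (c) a majorant of the transformed reward $H_{1\cdot}$; then minimality follows because any nonnegative concave majorant must dominate both the transformed reward and (by the DPP reduction of Lemma \ref{lem:hitting} and Proposition \ref{prop:v-star-version}) the transformed value.

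The key steps, in order, are as follows. First, translate each of (S-1)--(S-4) into the statement ``$W_i$ lies weakly above the transformed reward on the relevant piece'': (S-3) and (S-4) give this on $F_1(A_1),F_1(A_2)$ and $F_2(A_1),F_2(A_2)$ respectively, while (S-1) and (S-2) give it on $F_1(A_4),F_2(A_4)$ — there the transformed value is the horizontal axis pushed up, i.e. $W_i$ restricted to $A_4$ equals the transform of $v^p_i$, and majorization of $H_{1\cdot}$ by the horizontal axis in $F_i(A_4)$ says precisely $h-v^p_i\le 0$ there, hence $H_{1\cdot}\le 0 \le$ transform of $v^p_i$. Second, check concavity of $W_i$ globally: it is affine on the $F_i(A_2)$ piece by \eqref{eq:u1}; concave on $F_i(A_1)$ by the second halves of (S-3)/(S-4); and concave on $F_i(A_4)$ because $v^p_i$ solves a fourth-order ODE whose transform one shows is concave using the necessary conditions — this is where I would lean on the fact, established in Propositions \ref{prop:one-beta} and \ref{prop:one_solution}, that on $A_4$ the solution is a single bounded function $kz$, making the transformed graph easy to control. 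Third, and this is the delicate gluing step, verify concavity \emph{across} the thresholds $a$ (between $A_1$ and $A_2$) and $c$ (between $A_2$ and $A_4$): at $a$ this is automatic because $a$ was chosen in (N-2) as the contact point of the affine majorant with $H_{11}$, so the left-derivative does not exceed the slope $A$; at $c$ the smooth-fit conditions \eqref{eq:condition2}--\eqref{eq:condition3}, equivalently \eqref{eq:1-2}, force $H_{23}$ to have a maximum of value $0$ at $F_2(c)$, which makes the join between the affine piece and the $v^p$-piece concave. Fourth, invoke \citet[Proposition 5.12]{DK2003} in each regime to conclude $v^*(x,i)=\varphi_{q+\lambda_i}(x)W_i(F_i(x))$, which is the claimed optimality, and read off the optimal stopping time as the hitting time of $\Gamma_1\cup\Gamma_2$-type sets via Lemma \ref{lem:hitting}.

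The main obstacle I anticipate is the cross-threshold concavity at $c$, and more generally making rigorous that the globally glued $W_i$ is concave when $A_4$ may be disconnected and the thresholds come in multiples. Within a single connected component the smooth-fit identities \eqref{eq:1-2} handle the $A_2/A_4$ join cleanly, but one must also rule out a ``bump'' of the reward above $W_i$ strictly inside $F_i(A_4)$ or $F_i(A_2)$; (S-1)/(S-2) are stated precisely to exclude the former, and in $A_2$ the affine form of $u_1$ together with (S-3) excludes the latter. A secondary subtlety is that the two regimes are coupled through $v^p_1,v^p_2$ and through $u_1$ appearing inside $H_{22}$, so the verification for regime $2$ is not literally symmetric to regime $1$; I would address this by treating $u_1$ (already pinned down in the regime-$1$ argument) as a known continuous function of linear growth and re-running the $F_2$-concavity argument with reward $H_{2\cdot}$ as in \eqref{eq:H2-all}. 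Everything else — nonnegativity, the majorant inequalities on $A_1$, and the identification of the stopping set — is routine once the reductions of Section \ref{sec:formulation} are in hand.
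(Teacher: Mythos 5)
Your readings of (S-1), (S-2) and (S-3) do match the paper's: (S-1) and (S-2) are precisely the statement that the smallest nonnegative concave majorant of the transformed reward on $F_i(A_4)$ is the zero function, which is what $\hat{u}_1\equiv\hat{u}_2\equiv 0$ from Proposition \ref{prop:regions} requires; (S-3) holds by the definition of $u_1$; and the paper disposes of (S-4) in one line by observing that $X^{(2)}$ is in its stopping region on $A_1\cup A_2$, so $v^*(\cdot,2)=h$ there and the $F_2$-concavity of the value function forces the transformed reward to be concave on $F_2(A_1)$ and $F_2(A_2)$. That region-by-region verification against Proposition \ref{prop:regions} is the entire proof.

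The genuine gap is your central architectural step: assembling, for each regime, a single function $W_i$ over all of $F_i(\II)$ and proving it globally concave, with a ``delicate gluing'' at the $A_2/A_4$ threshold $c$. No such globally concave object exists, and the paper never constructs one. The majorants in different regions majorize \emph{different} rewards --- the reward in \eqref{eq:sup-couple1} changes form across the $A_j$ because $v^*(\cdot,2)$ does, and is only piecewise continuous on $\II$ --- so the Dayanik--Karatzas characterization is invoked separately on each region via Proposition \ref{prop:pham}. Concretely, the pieces you propose to glue do not even meet continuously: on $F_1(A_4)$ the relevant majorant is $\equiv 0$ (the transform of $\hat{u}_1$), while on $F_1(A_2)$ it is the affine function $AF_1+B$ of \eqref{eq:u1} majorizing $H_{11}$, and its value at $F_1(c)$, namely $u_1(c)/\varphi_{q+\lambda_1}(c)$, is strictly positive (it equals $1948.21$ in the paper's worked example), so the concatenation jumps upward from $0$ at $F_1(c)$ and cannot be concave there. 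The smooth-fit identities \eqref{eq:1-2} make $H_{23}$ tangent to the horizontal axis at $F_2(c)$; they do not, and need not, render any cross-region concatenation concave. Your secondary worry about regime~2 not being symmetric is resolved the same way: once $u_1$ is pinned down, (S-4) is just the concavity of the transformed reward on the stopping region of $X^{(2)}$, where value and reward coincide. Dropping the global-gluing step and checking each of (S-1)--(S-4) directly against the region-wise characterization yields the paper's argument.
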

\begin{proof}
It is proved in Proposition \ref{prop:regions} that in $A_4$ region, $\hat{u}_2(x)=\varphi_{q+\lambda_2}(x)W_2(F_2(x))\equiv 0$. Then the reward function $h-\lambda_2U^{(q+\lambda_2)}v_1^p(x)$ must be majorized by the horizontal axis in the transformed space by $F_2(x)$. The same is true for the reward function $h-\lambda_1U^{(q+\lambda_1)}v_2^p(x)$ since $\hat{u}_1(x)=0$ in $A_4$. The statement (S-3) holds by the definition of $u_1$. The final statement (S-4) is obvious by noting that $X^{(2)}$ is in its stopping region in $A_1$ and $A_2$.
\end{proof}

We summarize the procedure for finding the value functions.\\
\noindent\underline{Procedure :}
\begin{enumerate}
\item Given the diffusion, its parameters, and $\lambda_i, i=1, 2$, compute $(\psi_{q+\lambda_1}, \varphi_{q+\lambda_1})$, $(\psi_{q+\lambda_2}, \varphi_{q+\lambda_2})$, and $F_1, F_2$.
\item Solve \eqref{eq:ode11} for $v_1^p(x)$ which leads to $v_2^p(x)$ via \eqref{eq:couple-recursive-A4}.
\item Identify the threshold points in Table \ref{tbl:thresholds} that satisfy the necessary conditions of Proposition \ref{prop:necessary_cond}.
\item Make sure the sufficient conditions (S-1)$\sim$(S-4) in Proposition \ref{prop:sufficient-cond} are satisfied. 
\end{enumerate}
Once one obtains $v^*(x, 1)$ and $v^*(x, 2)$ that satisfy the conditions in Propositions \ref{prop:regions} and \ref{prop:necessary_cond} using the method we described, the pair is a solution if conditions (S-1)$\sim$ (S-4) in Proposition \ref{prop:sufficient-cond} are satisfied. No proof of the verification lemma  (\citet[Theorem 10.4.1]{oksendal-book})  for optimality is necessary. We simply need to check conditions (S-1)$\sim$ (S-4) geometrically. To illustrate this, we will solve a real-life problem in the next section.

\section{Example}\label{sec:example}
We study an American capped call option on dividend-paying stock (\citet{broadie-detemple1995} and \citet{DK2003}). The stock price is driven by
\[\diff X_t=(r-\delta_{\eta_t})X_t\diff t + \sigma_{\eta_t}X_t\diff B_t, \quad \eta_t \in \{1, 2\} \]
with constants $\sigma_{\eta_t}>0, r>0$, and $\delta_{\eta_t}>0$. The constants $r$ and $\delta$ indicate the risk-free interest rate and dividend rate, respectively. This process is a regime-switching geometric Brownian motion with state space $\II=(0, \infty)$ and both boundaries are natural.
The optimal stopping problem for the perpetual American call option with strike price $K\ge 0$ and the cap $L>K$ in regime-switching environment is
\[
v^*(x, i)=\sup_{\tau\in \S}\E_i^x[e^{-r\tau}(X_\tau \wedge L-K)^+]
\] under a risk-neutral probability measure with
\begin{align*}
h(x)=\begin{cases}
(x-K)^+, & 0<x\le L, \\
(L-K)^+, &x> L.
\end{cases}
\end{align*}
We can verify that the quantities in \eqref{eq:finiteness} are finite. It is natural to assume that the drift term $r-\delta_i>0$ for each $i$. \\ 
\noindent \emph{Step (1)}: Using the infinitesimal generator $\G_i:=\frac{\sigma^2_ix^2}{2}\frac{\diff^2 }{\diff x^2}+(r-\delta_i)x\frac{\diff }{\diff x}$, the fundamental solutions of $(r+\lambda_i)v(x)-\G_i v(x)=0$ are
\[\varphi_{r+\lambda_i}(x)=x^{\gamma_{i, 1}}\conn \psi_{r+\lambda_i}(x)=x^{\gamma_{i, 2}}, \quad i=1, 2
\] where $\gamma_{i, 1}<0$ and $\gamma_{i, 2}>0$ for $i=1, 2$.  These are the roots of the quadratic equation $(1/2)\sigma_i^2x^2+(r-\delta_i-(\sigma_i^2/2))x-(r+\lambda_i)=0$.  Define $F_i$ as in \eqref{eq:F}:
\[F_i(x)=x^{\theta_i}, \quad \theta_i=\gamma_{i, 2}-\gamma_{i, 1}>0, \quad x>0, \quad i=1, 2.\]

\noindent \emph{Step (2)}: By solving \eqref{eq:ode11}, the value function $v^*(x, 1)$ in region $A_4$ is of the form
\[v^*(x, 1)=k x^{\beta^*}
\] where $\beta^*$ is the smaller of the two positive roots of \eqref{eq:root} with
\[
j_i(\beta)=r+\lambda_i-\left(r-\delta_i-\frac{1}{2}\sigma_i^2\right)\beta-\frac{1}{2}\sigma_i^2\beta^2, \quad i=1, 2,
\] and $k$ is a constant to be determined. We discard (1) two negative roots since the value function must be bounded when $x\rightarrow 0$ and (2) the larger of the positive two roots due to the finiteness condition of the resolvent of $v^*(x, 1)$.\\
Note that $\beta^*>1$. Indeed, let us set
\[
f(\beta):=j_1(\beta)j_2(\beta)-\lambda_1\lambda_2
\] where  $f(\beta^*)=0$ holds.  Since $\delta_i, \lambda_i>0$, we have $f(1)=(\lambda_1+\delta_1)(\lambda_2+\delta_2)-\lambda_1\lambda_2>0$ and $f'(1)=-(\lambda_1+\delta_1)(r-\delta_2+\frac{1}{2}\sigma_2^2)-(\lambda_2+\delta_2)(r-\delta_1+\frac{1}{2}\sigma_1^2)<0$ since $r-\delta_i>0$ for $i=1,2$. The continuity of $f$ implies that $\beta^*>1$.\\
\noindent\emph{Step (3)}: This step is about identifying the threshold $c$ between $A_4/A_2$ regions. From the form of the reward $h$, it is obvious that the left boundary of $A_4$ region is $0$. Using Proposition \ref{prop:necessary_cond} and setting $D:=\frac{\lambda_2}{j_2(\beta^*)}$, we solve \eqref{eq:1-2} with
\begin{equation*}
H_{23}(y)=  \frac{h-\lambda_2U^{(r+\lambda_2)}kx^{\beta^*}}{\varphi_{r+\lambda_2}} \circ F_2^{-1}(y)=
\begin{cases}
-Dky^{\frac{\beta^*}{\theta_2}}y^{-\frac{\gamma_{2, 1}}{\theta_2}}, & y \in (0, F_2(K)],\\
\left(y^{\frac{1}{\theta_2}}-K-Dky^{\frac{\beta^*}{\theta_2}}\right)y^{-\frac{\gamma_{2, 1}}{\theta_2}}, & y \in (F_2(K), F_2(L)],\\
\left(L-K-Dky^{\frac{\beta^*}{\theta_2}}\right)y^{-\frac{\gamma_{2, 1}}{\theta_2}}, & y \in (F_2(L), \infty).
\end{cases}
\end{equation*}
The two conditions in \eqref{eq:1-2} provide \begin{equation}\label{eq:k-and-c}
k=\frac{c-K}{D}c^{-\beta^*} >0 \conn c=\frac{\beta^*K}{\beta^*-1}>K
\end{equation}
since $\beta^*>1$.
From the second equation, we have
\begin{equation}\label{eq:c}
c<L \Leftrightarrow \beta^*>\frac{L}{L-K}.
\end{equation}

Suppose that $c<L$ holds.  We wish to show that $F_2(c)$ is the only point that satisfies \eqref{eq:1-2}. Since $D>0$ and $k>0$, we see  that $H_{23}(F_2(x))<0$ and is decreasing on $x\in (0, K)$. We have $H_{23}(F_2(K))<0$ and it suffices to confirm that $H_{23}'(y)$ vanishes at most two points. The equation $H'_{23}(y)=0$ reduces to
\[
(1-\gamma_{2,1})y^{\frac{1}{\theta_2}}-Dk(\beta^*-\gamma_{2,1})y^{\frac{\beta^*}{\theta_2}}=-\gamma_{2,1}K>0.
\]  The right-hand side is a constant and it is easily seen that the left-hand side has only one extreme point, which proves that $H'_{23}(y)$ vanishes at most twice.

It would be beneficial that we show graphs in each step.  The parameter set here is $(r, \delta_1, \delta_2, \lambda_1, \lambda_2, \sigma_1, \sigma_2)=(0.3, 0.2, 0.225, 1, 1, 0.5, 0.3)$ with $K=5$ and $L=15$.  We find  $\beta^*=1.85235$ as well as $(\gamma_{1, 1}, \gamma_{1, 2})=(-3.1265, 3.3265)$ and $(\gamma_{2, 1}, \gamma_{2, 2})=(-5.7185, 5.0518)$. From \eqref{eq:k-and-c} we obtain $c=10.8661<L$ and $k=0.0770$. Figure \ref{fig:H23} shows that $H_{23}(y)$ is majorized by the horizontal axis and is tangent to it at $c$. Thus, the smallest nonnegative concave majorant $W_2(y)\equiv 0$ and condition (S-2) in Proposition \ref{prop:sufficient-cond} is confirmed to be satisfied.
\begin{figure}[htbp]
	\begin{center}
		\begin{minipage}{0.9\textwidth}
			\centering{\includegraphics[scale=0.8]{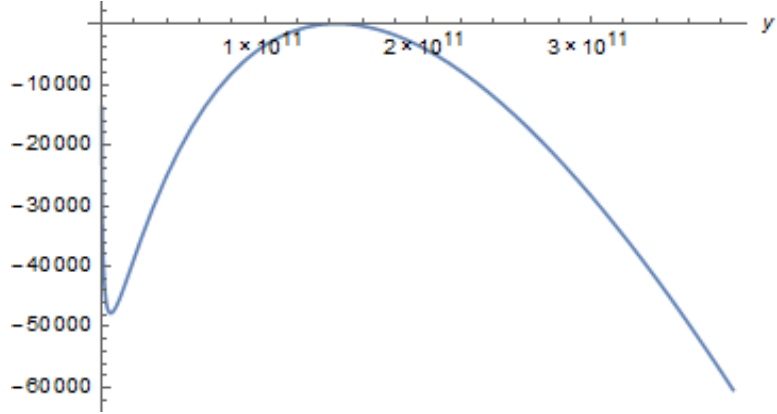}}\\
		\end{minipage}
		\caption{\small The graph of $H_{23}(y)$ on $[F_2(K), \infty)$ when $L=15$. } \label{fig:H23}
	\end{center}
\end{figure}

When choosing $c$ as a solution to \eqref{eq:k-and-c}, we may encounter two types of issues. First, the resulting $c$ may be greater than $L$ since \eqref{eq:k-and-c} does not take the constraint $L$ into consideration. Second, when setting $c$ as a solution to \eqref{eq:k-and-c}, the resulting value functions may be infeasible. The feasibility means that for some $x_0\in\mathcal{I}$, $v^*(x,i)\le h(x_0)$ must hold for all $x\in A_4$, $i=1,2$. We do not encounter these issues when $L=15$. We set $L=11.3$ in Section \ref{subsec:binding} and demonstrate the solution method for the case when the solution to \eqref{eq:k-and-c} leads to infeasible value functions. The same method can be used when the solution to \eqref{eq:k-and-c} is greater than $L$. 

Let us now come back to the original parameter set where $L=15$. We have already identified the threshold $c$ when $L=15$ and proved that there is only one point $c$ satisfying \eqref{eq:1-2}. From Remark \ref{rmrk:adjacent}, $A_4$ is adjacent to $A_2$. In the next step we will find the associated point $a$ that separates $A_2$ and $A_1$ regions. Let us prepare by setting up Tables \ref{tbl:example-rewards1} and \ref{tbl:example-rewards2} similar to the ones in Section \ref{sec:method}.

\begin{table}[h]
\caption{The reward function in \eqref{eq:sup-couple1} and the value function  $v^*(x, 1)$}
	\begin{tabular}{|c|c|c|c|}
		\hline
		Region & $A_4: (0, c)$  & $A_2$ & $A_1$ \\
		\hline
		Reward function & $\underset{(H_{12})}{h-kx^{\beta^*}}$  &  $\underset{(H_{11})}{h-\lambda_1U^{(r+\lambda_1)}h}$  &  $\underset{(H_{11})}{h-\lambda_1U^{(r+\lambda_1)}h}$ \\
		\hline
		Value function & $\quad kx^{\beta^*}\quad $ & $u_1+\lambda_1U^{(r+\lambda_1)}h$  & $h$ \\
		\hline
	\end{tabular}\label{tbl:example-rewards1}
\end{table}

\begin{table}[h]
\caption{The reward function in \eqref{eq:sup-couple2} and the value function  $v^*(x, 2)$}
	\small{\begin{tabular}{|c|c|c|c|}
			\hline
			Region & $A_4: (0, c)$ & $A_2$ & $A_1$ \\
			\hline
			Reward function & $ \underset{(H_{23})}{h-\lambda_2U^{(r+\lambda_2)}kx^{\beta^*}}$ & $\underset{(H_{22})}{h-\lambda_2U^{(r+\lambda_2)}[u_1+\lambda_1U^{(r+\lambda_1)}h]}$ & $\underset{(H_{21})}{h-\lambda_2U^{(r+\lambda_2)}h}$ \\
			\hline
			Value function & $\lambda_2U^{(r+\lambda_2)}kx^{\beta^*}$ & $h$  & $h$ \\
			\hline
	\end{tabular}}\label{tbl:example-rewards2}
\end{table}
Let us frist write down:
\begin{subequations}\label{eq:data}
	\begin{equation}\label{eq:Uv1-example}
	\lambda_2U^{(r+\lambda_2)}kx^{\beta^*}=\frac{\lambda_2}{j_2(\beta^*)}kx^{\beta^*}=:Dkx^{\beta^*},  \quad D>0,
	\end{equation}
	\begin{align}\label{eq:Uh-example}
	\lambda_iU^{(r+\lambda_i)}h(x)=\begin{cases}
	\lambda_i\left(\frac{x}{\lambda_i+\delta_i}-\frac{K}{\lambda_i+r}\right), & K<x\le L,\\
	\frac{\lambda_i}{\lambda_i+r}(L-K), & x> L,
	\end{cases}
	\end{align}
	\begin{equation}\label{eq:Uu1-example}
	\lambda_2U^{(r+\lambda_2)}u_1(x)=\frac{\lambda_2 A\psi_{r+\lambda_1}(x)}{j_2(\gamma_{1,2})}+\frac{\lambda_2 B\varphi_{r+\lambda_1}(x)}{j_2(\gamma_{1,1})},
	\end{equation}
	and
	\begin{align}\label{eq:double-example}
	\lambda_2U^{(r+\lambda_2)}\lambda_1U^{(r+\lambda_1)}h(x)=\begin{cases}
	\lambda_2\lambda_1\left(\dfrac{x}{(\lambda_2+\delta_2)(\lambda_1+\delta_1)}-\dfrac{K}{(\lambda_2+r)(\lambda_1+r)}\right), & K<x\le L, \\
	\dfrac{\lambda_2\lambda_1}{(\lambda_2+r)(\lambda_1+r)}(L-K), & x> L.
	\end{cases}
	\end{align}
\end{subequations}
Note that by a direct calculation, we have $j_2(\gamma_{1,2})>0$ and $j_2(\gamma_{1,1})>0$ which confirms \eqref{eq:Uu1-example} is finite. \\
Figure \ref{fig:H2-entire-space} displays, in the case of $L=15$, the reward functions in Table \ref{tbl:example-rewards2} in the transformed space.  While we should wait until the next step completes to draw the functions in the right panel (b), we nonetheless show them here together with the left panel for convenience. The left panel (a) focuses on the region $y\in (0, F_2(c))$ and shows the reward function $h(x)-\lambda_2U^{(r+\lambda_2)}v^p_1(x)$ in the transformed space (which is $H_{23}(y)$) and the smallest nonnegative concave majorant $W_2(y)\equiv 0$. Hence, $\hat{u}_2(x)=0$ and
\[
v^*(x, 2)=\lambda_2U^{(r+\lambda_2)}v_1^p(x)=\lambda_2U^{(r+\lambda_2)}kx^{\beta^*}=Dkx^{\beta^*}, \quad x\in (0, c),
\] where $k$ and $c$ are given in \eqref{eq:k-and-c}.  It is confirmed that the region $x\in (0, c)$ is a continuation region for diffusion $X^{(2)}$.
\begin{center}
\begin{figure}[h]
	\begin{subfigure}{0.35\textwidth}
		\includegraphics[scale=0.4]{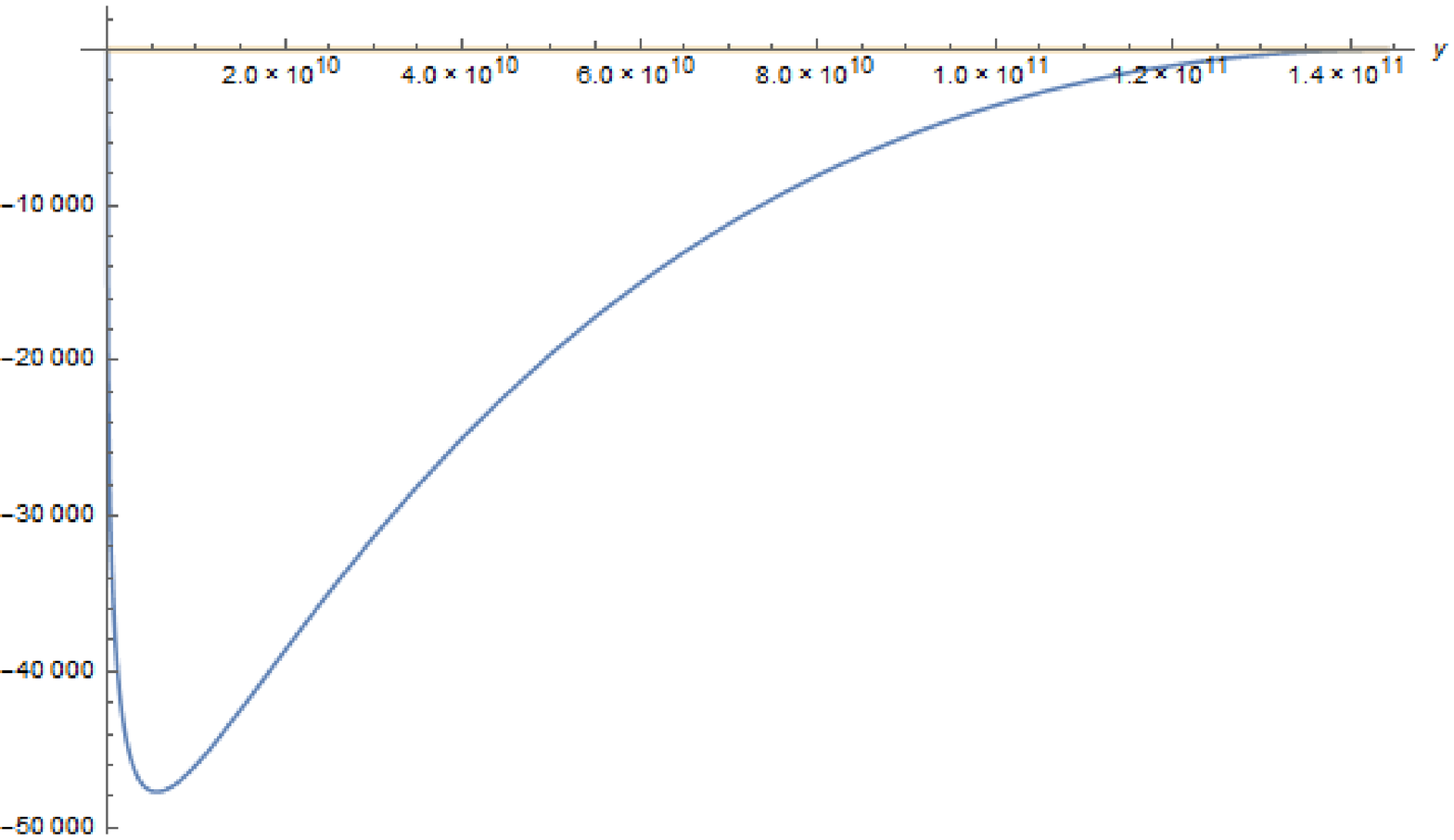}
		\caption{}
	\end{subfigure}\hspace{3cm}
	\begin{subfigure}{0.35\textwidth}
		\includegraphics[scale=0.55]{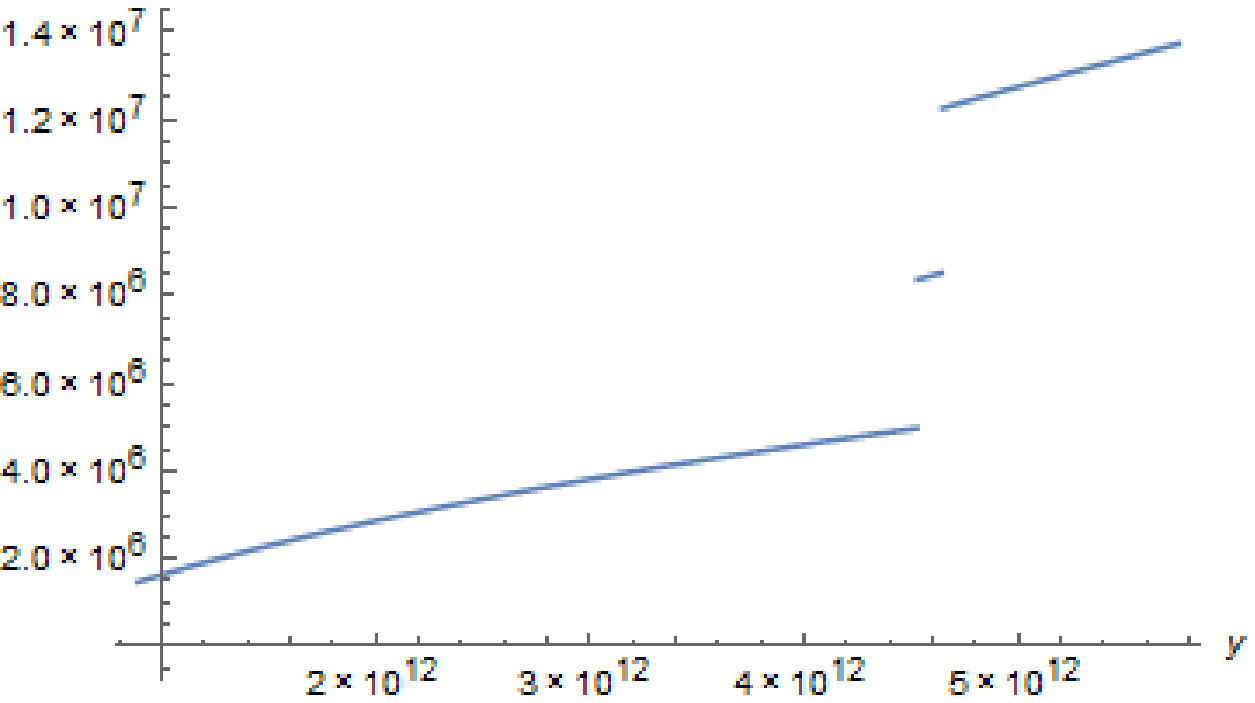}
		\caption{}
	\end{subfigure}
	\caption{The reward functions (Table \ref{tbl:example-rewards2}) in the transformed space by $F_2(x)$ when $L=15$. (a) The smallest nonnegative concave majorant $W_2(y)=0$ and the reward function $H_{23}(y)$ on $(0, F_2(c))$ and (b) the reward functions $H_{22}(y)$ and $H_{21}(y)$ on $[F_2(c), \infty)$, which are both concave; therefore, the smallest nonnegative concave majorants are these functions themselves.}
	\label{fig:H2-entire-space}
\end{figure}
\end{center}
\noindent\emph{Step (4)}: This step is concerned with identifying the threshold $a$. Due to the continuity of $v^*(x, 1)$ at $x=c$, we have $u_1(c)=kc^{\beta^*}-\lambda_1U^{(r+\lambda_1)}h(c)$ from \eqref{eq:condition1}. Therefore, we need to find the smallest nonnegative majorant  line $W_1(y)$ which passes $\left(F_1(c), \frac{u_1(c)}{\varphi_{r+\lambda_1}(c)}\right)$ and majorizes the first part of $H_{11}(y)$:
\begin{align*}
H_{11}(y)=\frac{h-\lambda_1 U^{(r+\lambda_1)}h}{\varphi_{r+\lambda_1}} \circ F_1^{-1}(y)
=\begin{cases}
\left(\frac{\delta_1}{\lambda_1+\delta_1}y^{\frac{1}{\theta_1}}-\frac{r}{\lambda_1+r}K\right)y^{-\frac{\gamma_{1,1}}{\theta_1}}, & y\in (F_1(c), F_1(L)],\\
\frac{r}{\lambda_1+r}(L-K)y^{-\frac{\gamma_{1,1}}{\theta_1}}, &y \in (F_1(L), \infty)
\end{cases}
\end{align*}
because the value function in the transformed space is the smallest nonnegative concave majorant of $H_{11}(y)$ on $[F_1(c), \infty)$ and $a\le L$ must hold. This function $W_1(F_1(x))$ fulfills the requirement (S-3) in Proposition \ref{prop:sufficient-cond}. It should be stressed that we find the  smallest nonnegative concave majorant of $H_{11}(y)$ on $[F_1(c), F_1(L))$. 

Refer to Table \ref{tbl:example-rewards1}. The reward at $x=c$ satisfies $h(c)-\lambda_1U^{(r+\lambda_1)}v^*(c, 2)=h(c)-\lambda_1U^{(r+\lambda_1)}h(c)$
since $c$ is in $\Gamma_2$.  Computing the difference of the preceding two equations at point $c$,
\[
kc^{\beta^*}-\lambda_1U^{(r+\lambda_1)}h(c)-(h(c)-\lambda_1U^{(r+\lambda_1)}h(c))=v^p_1(c)-h(c)>0
\] where the  positiveness is due to $c \in \mathrm{C}_1$ and the variational inequality. While this is true in all cases, we state it for this example as well. This means that when we transform the space by using $F_1(\cdot)$, the point $\frac{u_1(c)}{\varphi_{r+\lambda_1}(c)}$ is located above $H_{11}(F_1(c))$, enabling us to draw the smallest nonnegative concave majorant $W_1(y)$ from point $F_1(c)$. 

Note that the inflection point of the first part of $H_{11}(y)$ is $\hat{y}:=\left(\frac{r}{\delta_1}K\right)^{\theta_1}$ where we use the relationship $\frac{\gamma_{1, 1}\gamma_{1, 2}}{(1-\gamma_{1, 1})(1-\gamma_{1, 2})}=\frac{\lambda_1+r}{\lambda_1+\delta_1}$ and that
the second part of $H_{11}(y)$ is, by  direct differentiation,  a concave function.
When $L=15$, we have $c>\frac{r}{\delta_1}K=7.50$ and therefore, $H_{11}(y)$ is concave on $[F_2(c), \infty)$. Then the smallest nonnegative concave majorant of $H_{11}(y)$  is the line $W(y)=Ay+B$ with $A=0.0001278$ and $B=1436.5$, which is tangent to $H_{11}(y)$ at $F_1(a): a=14.9651$. It is depicted in Figure \ref{fig:H11}. 
\begin{figure}[h]
	\begin{center}
		\begin{minipage}{0.9\textwidth}
			\centering{\includegraphics[scale=0.8]{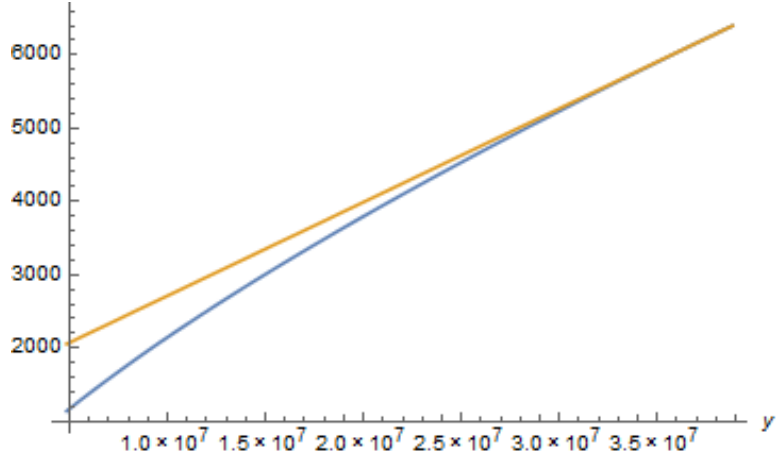}}\\
		\end{minipage}
		\caption{\small The graph of $H_{11}(y)$ on $[F_1(c), F_1(a))$ when $L=15$ and its smallest concave majorant $W_1(y)$.  } \label{fig:H11}
	\end{center}
\end{figure}
Once we have identified the point $a$, we go back to Panel (b) of Figure \ref{fig:H2-entire-space} which displays the reward function in Table \ref{tbl:example-rewards2} in the region $y\in [F_2(c), \infty)$ for $X^{(2)}$.  The discontinuities occur at $y=F_2(c)$, $y=F_2(a)$, and $y=F_2(L)$.  In $y\in [F_2(c), F_2(a))$, the function is
\[H_{22}(y)=\frac{h-\lambda_2U^{(r+\lambda_2)}[u_1+\lambda_1U^{(r+\lambda_1)}h] }{\varphi_{r+\lambda_2}} \circ F_2^{-1}(y),
\] where the specific forms of the functions in the numerator are shown in \eqref{eq:Uu1-example} and \eqref{eq:double-example}. In $y\in [F_2(a), \infty)$, the function is
\[
H_{21}(y)=\frac{h-\lambda_2 U^{(r+\lambda_2)}h}{\varphi_{r+\lambda_2}} \circ F_2^{-1}(y),
\] where the specific form of the function in the numerator is given in \eqref{eq:Uh-example}. Since both $H_{21}(y)$ and $H_{22}(y)$ are concave in their respective domains, the smallest nonnegative concave majorants are the functions themselves as desired (see S-4 in Proposition \ref{prop:sufficient-cond}).  That is, $[c, \infty)$ is the stopping region for diffusion $X^{(2)}$.

To confirm that condition S-1 (Proposition \ref{prop:sufficient-cond}) is satisfied, we need to draw
\[
H_{12}(y):=\frac{h-\lambda_1\lambda_2U^{(r+\lambda_1)}U^{(r+\lambda_2)}v^p_1}{\varphi_{r+\lambda_1}}\circ F_{1}^{-1}(y)=\frac{h-v^p_1}{\varphi_{r+\lambda_1}}\circ F_{1}^{-1}(y), \quad y\in (0, F_1(c))
\]
and see if it is negative on $(0, F_1(c))$ so that $H_{12}(y)$ is majorized by $W_1(y)\equiv 0$ in $A_4$. Figure \ref{fig:H1-entire} illustrates, in the case of $L=15$, the reward functions in Table \ref{tbl:example-rewards1} in the transformed space. The first panel (a) shows $H_{12}(y)$ on $(0, F_1(K))$ and the second panel does the same on $[F_1(K), F_1(c))$.  That is, the two panels show the function  $H_{12}(y)$ on $(0, F_1(c))$ and its smallest nonnegative concave majorant $W_1(y)\equiv 0$ in the region. The third panel (c) focuses on the region $[F_1(c), \infty)$. The reward function is $H_{11}(y)$ in the transformed space. 
\begin{center}
	\begin{figure}[h]
		\begin{subfigure}{0.45\textwidth}
			\includegraphics[scale=0.6]{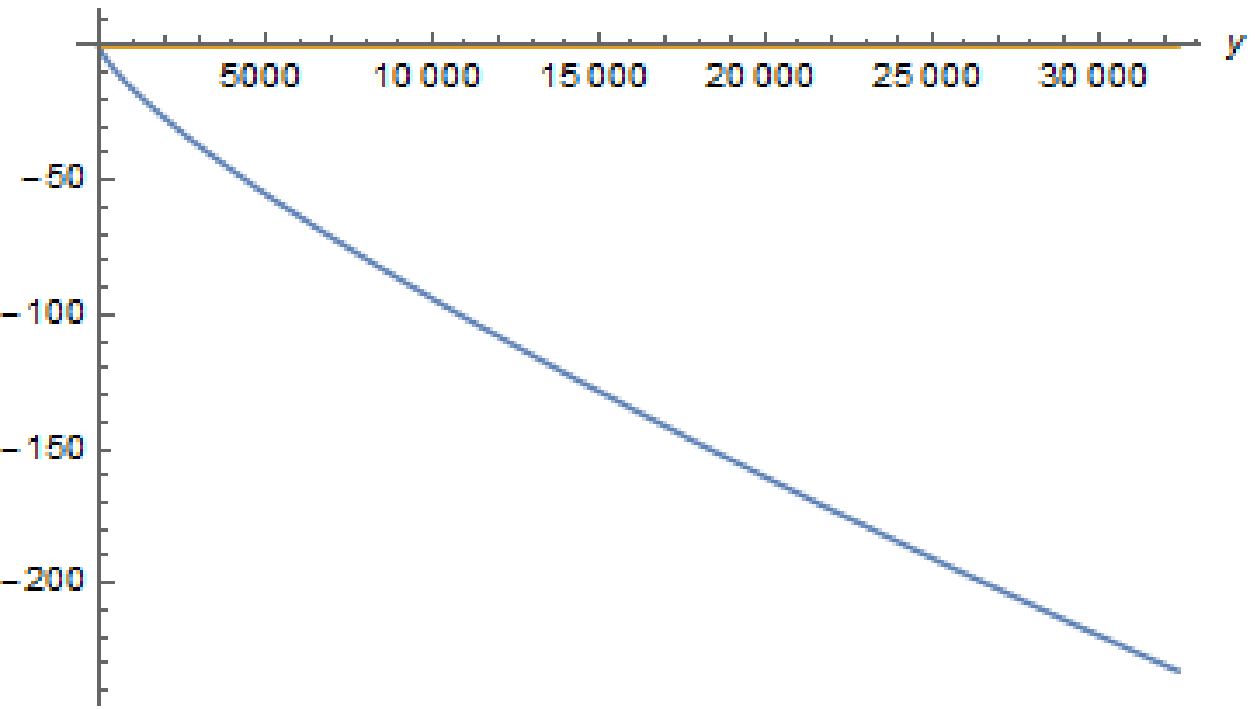}
			\caption{}
		\end{subfigure}
		\begin{subfigure}{0.45\textwidth}
			\includegraphics[scale=0.6]{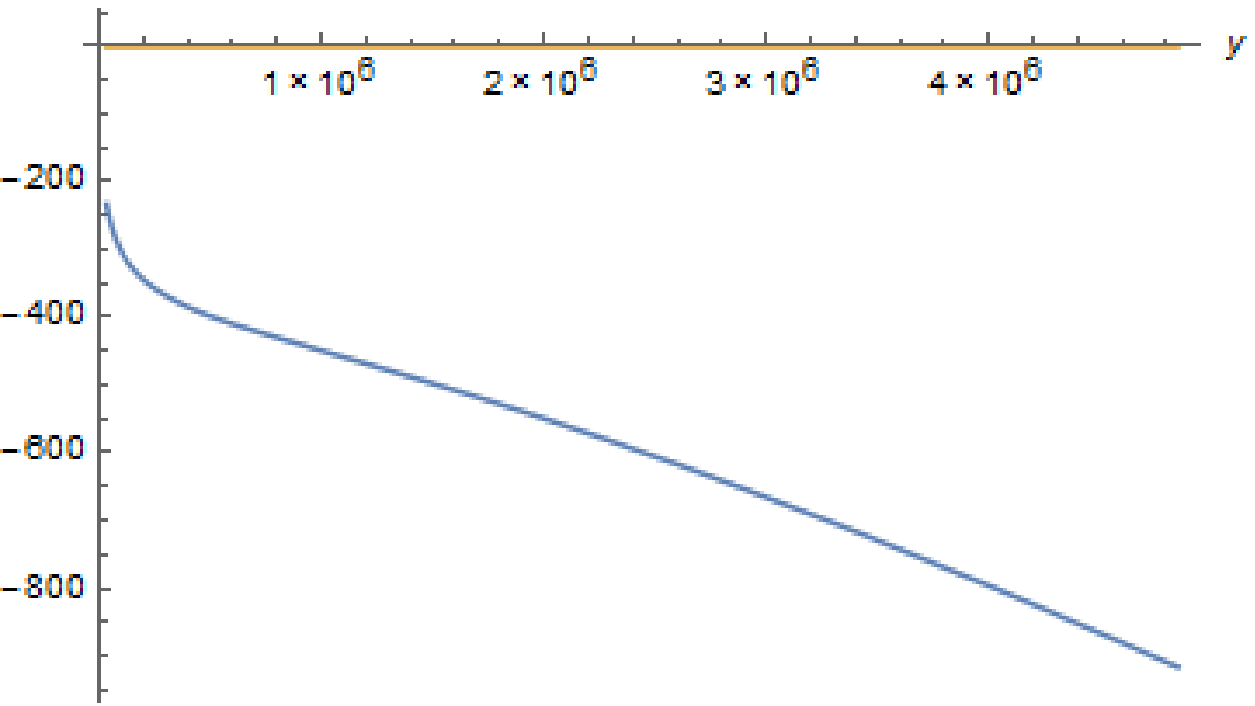}
			\caption{}
		\end{subfigure}
		\begin{subfigure}{0.45\textwidth}
			\includegraphics[scale=0.6]{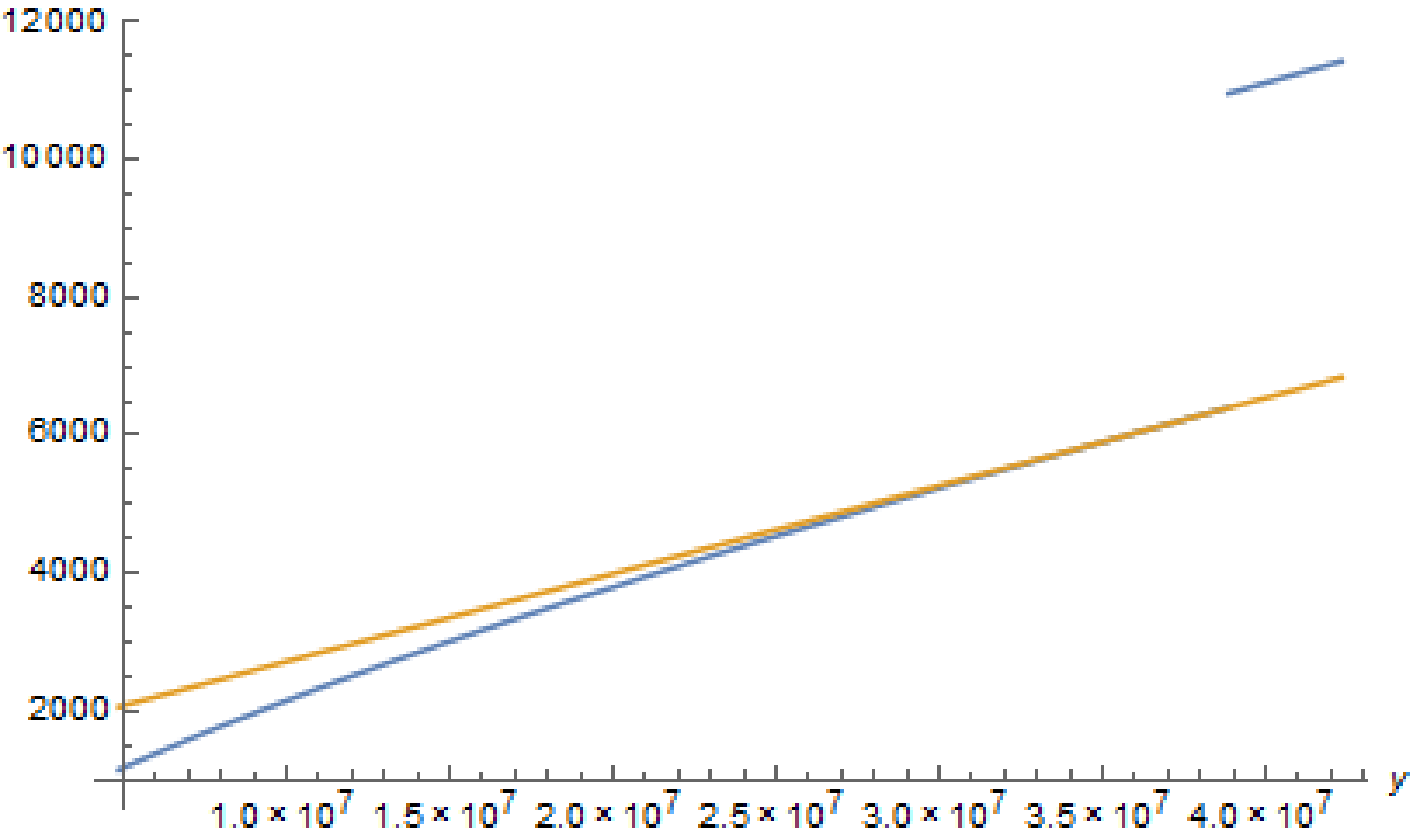}
			\caption{}
		\end{subfigure}
		\caption{The reward functions in Table \ref{tbl:example-rewards1} in the transformed space by $F_1(x)$ when $L=15$. (a) The smallest nonnegative concave majorant $W_1(y)=0$ and the reward function $H_{12}(y)$ on $(0,  F_1(K))$ and (b) the same on $[F_1(K), F_1(c))$.  The third panel (c) depicts the line tangent to $H_{11}(y)$ on $[F_1(c), \infty)$. }
		\label{fig:H1-entire}
	\end{figure}
\end{center}
Note that the line tangent to $H_{11}(y)$ is the same  as the one shown in Figure \ref{fig:H11}. Hence, the smallest concave majorant $W_1(y)$ is zero  on $(0, F_1(c))$, the line $Ay+B$ on $[F_1(c), F_1(a))$ and $H_{11}(y)$ itself on $[F_1(a), \infty)$.  The discontinuity of $H_{11}(y)$ corresponds to point $F_1(L)$.
The continuation region $\mathrm{C}_1$ is $(0, c)\cup [c, a)$ and the stopping region is $[a, \infty)$.  Note further that if there is no tangency point on $(F_1(c), F_1(L))$, then the smallest nonnegative concave majorant $W_1(y)$ is the line connecting $\left(F_1(c), \frac{u_1(c)}{\varphi_{r+\lambda_1}(c)}\right)$ and $(F_1(L), H_{11}(F_1(L)))$. In this case, as well, we can find the slope $A$ and the intercept $B$ with $a=L$ (see Section \ref{subsec:binding} where $L=11.3$).

Now from Tables \ref{tbl:example-rewards1} and \ref{tbl:example-rewards2}, we obtain 
\begin{equation}\label{v2-example}
v^*(x, 2)=\begin{cases}
Dkx^{\beta^*}, & x \in (0, c),\\
x\wedge L-K, & x\in [c, \infty).
\end{cases}
\end{equation}
\begin{equation}\label{v1-example}
v^*(x, 1)=\begin{cases}
kx^{\beta^*}, & x \in (0, c),\\
A\psi_{r+\lambda_1}(x)+B\varphi_{r+\lambda_1}(x)+\lambda_1U^{(r+\lambda_1)}(x-K), & x\in [c, a),\\
x\wedge L-K, & x\in [a, \infty).
\end{cases}
\end{equation}
for the case when $L=15$. Figure \ref{fig:v1-v2} summarizes the functions $v^*(x, 1), v^*(x, 2)$, and $h(x)$ from \eqref{v2-example} and \eqref{v1-example} on $(0, \infty)$ when $L=15$. We have $v^*(x, 1)\ge v^*(x, 2)\ge h(x)$ on $x\in (0, \infty)$. We see that $v^*(x, 1)$ is smooth at $a=14.9651$ and $v^*(x, 2)$ is smooth at $c=10.8661$. Note  that we have checked all the sufficient conditions S-1, S-2, S-3, S-4 in Proposition \ref{prop:sufficient-cond}.
\begin{figure}[]
	\begin{center}
		\begin{minipage}{0.9\textwidth}
			\centering{\includegraphics[scale=0.8]{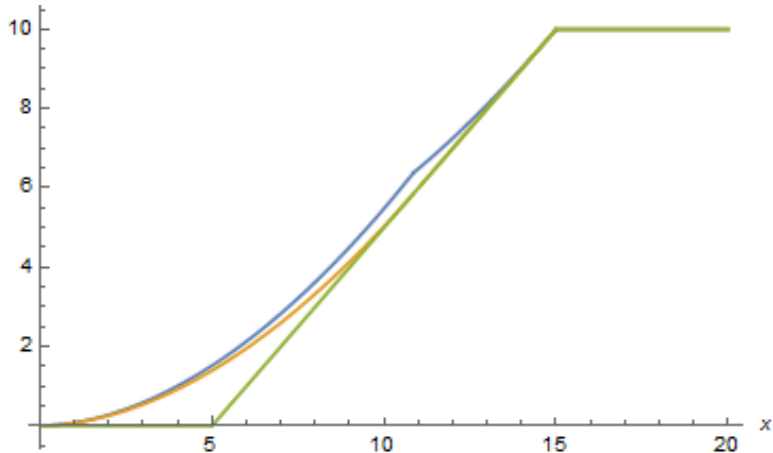}}\\
		\end{minipage}
		\caption{\small The graphs of $v^*(x, 1)$ (topmost, blue), $v^*(x, 2)$ (middle, orange),  and $h(x)$ (bottom, green) when $L=15$. } \label{fig:v1-v2}
	\end{center}
\end{figure}
\subsection{Binding Constraint}\label{subsec:binding}
As $L$ becomes small, we need to make sure that the resulting pair $(v^*(x,1),v^*(x,2))$ in $A_4$ region gives appropriate values. Note that when $X$ starts at a point in $(0, L)$, the point $L$ serves as the absorbing boundary. The solution to \eqref{eq:k-and-c} may yield $c\ge L$. Even if we can find $K<c\le L$ from \eqref{eq:k-and-c}, with small $L$ we may encounter a case when  $v^*(c,1)>L-K$. This is not surprising because \eqref{eq:k-and-c} does not take the constraint $L$ into consideration. The equations in \eqref{eq:k-and-c} can be seen as necessary conditions for unconstrained optimization. When the constraint $L$ becomes binding, we cannot ensure that $H'_{23}(F_2(c))=0$ (see \eqref{eq:1-2}) but $H_{23}(F_2(c))=0$ still holds since $v^*(x,2)$ is continuous (see Proposition \ref{prop:necessary_cond}). 

Let $p\le L$ be the point that $H_{23}(F_2(p))=0$. The weight $k$ satisfying this equation can be expressed as a function of $p$ using \eqref{eq:Uv1-example}: $k(p)=\left(\frac{p-K}{D}\right)p^{-\beta^*}$. Then we define
\begin{align*}
v_2(x, 2;p):=k(p)Dx^{\beta^*},
\end{align*}
a family of functions varying with the values of $p$.  Since $k(p)\ge 0$ for any $p\in [K, L]$, $v_2(x, 2; p)\ge v_2(x, 2; p')$ for any $x\in (0, L]$ if and only if $k(p)\ge k(p')$.  Let us take the derivative of $k(p)$ with respect to $p$
\[
\frac{\diff}{\diff p}k(p)=p^{-\beta^*-1}\frac{((1-\beta^*)p+\beta^*K)}{D}
\] which is nonnegative because we have $\beta^*\le \frac{p}{p-K}$: note that we have assumed that \eqref{eq:c} is violated.  It follows that $v_2(x,2;p)$ is increasing in $p$. Therefore, we shall choose the largest value of $p$ that assures a feasible solution. Furthermore, $p$ should be such that $v^*(x,1)\ge v^*(y,1)$ holds for $x\ge y$, $x,y\in A_2$. This means that $v^*(x,1)$ is nondecreasing on $A_2$ since the stopping region of $X^{(1)}$ is on the right of $A_2$ and $h$ is nondecreasing. Once we set $c$ to a certain value, it will give us $k(c)$. Using $(c,k(c))$, we can employ the same procedure as in \emph{Step (4)} above to find $u_1$. We will demonstrate this by an example.
\newline\indent Let us set $L=11.3$. If we use the conditions in \eqref{eq:1-2}, we compute $c=10.8661$ and $v^*(c,1)=6.3943>L-K$ violates the constraint. The largest value of $c$ satisfying the conditions in the preceding paragraph is $c=10.7600$. Then, $k=0.0770$ and $\frac{u_1(c)}{\phi_{r+\lambda_1}(c)}=1948.21$. Note that the conditions (S-1) and (S-2) in Proposition \ref{prop:sufficient-cond} are satisfied. The smallest nonnegative concave majorant of $H_{11}$ will be the line passing $\left(F_1(c), \frac{u_1(c)}{\varphi_{r+\lambda_1}(c)}\right)$ and $(F_1(L), H_{11}(F_1(L)))$ (see Figure \ref{fig:u1constrained}). The slope of this line is $-0.0003062$. Thus, $a=L$ and
\begin{equation}\label{v1-example-singular}
v^*(x, 1)=\begin{cases}
kx^{\beta^*}, & x \in (0, c),\\
A'\psi_{r+\lambda_1}(x)+B'\varphi_{r+\lambda_1}(x)+\lambda_1U^{(r+\lambda_1)}(x-K), & x\in [c, L),\\
L-K, & x\in [L, \infty),
\end{cases}
\end{equation}
where $A'$ and $B'$ are the slope and the intercept of the line connecting $\left(F_1(c), \frac{u_1(c)}{\varphi_{r+\lambda_1}(c)}\right)$ and $(F_1(L), H_{11}(F_1(L)))$. Note that the condition (S-3) in Proposition \ref{prop:sufficient-cond} is satisfied. The condition (S-4) in Proposition \ref{prop:sufficient-cond} is also satisfied and $[c,\infty)$ is the stopping region for $X^{(2)}$. The resulting value functions are displayed in Figure \ref{fig:Vsconstrained}.
\begin{figure}[h]
	\includegraphics[scale=0.6]{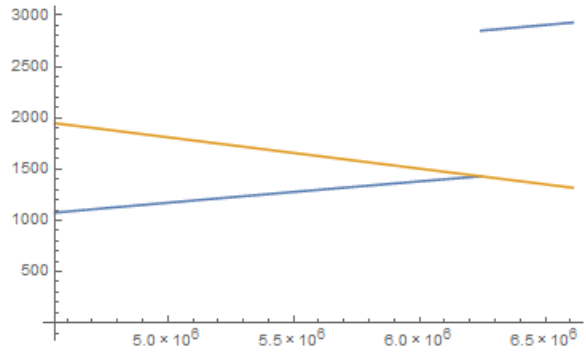}
	\caption{\small The graph of $H_{11}$ and the line  passing $\left(F_1(c), \frac{u_1(c)}{\varphi_{r+\lambda_1}(c)}\right)$ and $(F_1(L), H_{11}(F_1(L)))$ for $L=11.3$. Since the point $L$ is the absorbing boundary, we only need to consider the region on $(0,F_1(L))$.}\label{fig:u1constrained}
\end{figure}
\begin{figure}[h]
	\includegraphics[scale=0.65]{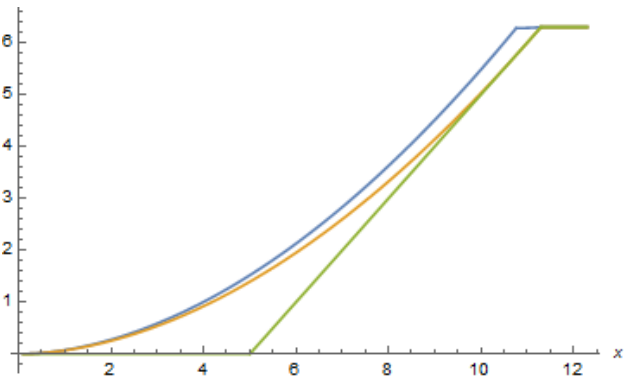}
	\caption{\small The graphs of $v^*(x, 1)$ (topmost, blue), $v^*(x, 2)$ (middle, orange)  and $h(x)$ (bottom, green) when $L=11.3$. Here $v^*(c,1)=6.2786<L-K$. From $c$ to $a=L$, $v^*(x,1)$ is increasing.}\label{fig:Vsconstrained}
\end{figure}

\remark{\normalfont In the end, we comment that it is possible to extend the idea of the paper to the model with more than two (any finite number) regimes given the matrix of transition probabilities. When starting in regime $i$, the second term in the DPP equation (Lemma \ref{lem:hitting}) will involve a weighted sum of $\bar{v}(\cdot,j)$ ($j\neq i$) weighted by the indicator that the regime was switched to regime $j$. While the increased number of regimes will lead to greater number of regions $A_j$, the idea of finding the value functions in each $A_j$ is the same as described in Sections \ref{subsec:form} and \ref{sec:method}.}

\def\cprime{$'$}

\end{document}